\documentclass[10pt,a4paper]{article}

\usepackage{arxiv}

\usepackage[utf8]{inputenc}
\usepackage[T1]{fontenc}
\usepackage{amsmath,amssymb,amsfonts,amsthm}
\usepackage{graphicx}
\usepackage{booktabs}
\usepackage{tikz}
\usetikzlibrary{shapes,arrows,positioning,calc,decorations.pathmorphing,patterns,arrows.meta,decorations.markings,shapes.geometric}
\usepackage{hyperref}
\usepackage{cleveref}
\usepackage{natbib}
\usepackage{doi}
\usepackage{textcomp}

\graphicspath{{figures/}}

\newtheorem{theorem}{Theorem}
\newtheorem{lemma}[theorem]{Lemma}
\newtheorem{proposition}[theorem]{Proposition}
\newtheorem{corollary}[theorem]{Corollary}
\newtheorem{definition}{Definition}

\newtheorem{assumption}{Assumption}
\newtheorem{remark}{Remark}

\newcommand{\R}{\mathbb{R}}
\newcommand{\N}{\mathbb{N}}
\newcommand{\Z}{\mathbb{Z}}
\newcommand{\dd}{\mathrm{d}}
\newcommand{\Sing}{\mathcal{S}}
\newcommand{\Val}{\mathcal{V}}
\newcommand{\Unc}{\mathcal{U}}
\newcommand{\Obs}{\mathcal{O}}
\newcommand{\Res}{\mathcal{R}}
\newcommand{\Xadm}{\mathcal{X}_{\mathrm{adm}}}

\title{On the Minimum Number of Control Laws for Nonlinear Systems\\with Input--Output Linearisation Singularities}

\author{
  Nikolaos D.~Tantaroudas\thanks{Corresponding author. Senior Researcher, ICCS.} \\
  Institute of Communications and Computer Systems (ICCS)\\
  National Technical University of Athens\\
  9 Iroon Politechniou Street, Zografou, Athens 15773, Greece\\
  \texttt{nikolaos.tantaroudas@iccs.gr} \\
}

\date{}

\begin{document}

\maketitle

\begin{abstract}
This paper studies how many distinct feedback-linearising control laws are required to control a single-input single-output nonlinear system whose exact input--output linearising law is singular on a nonempty subset of the state space. The study examines whether exact and approximate (Hauser--Sastry--Kokotovi\'{c}) linearising laws covers a region if every state of the region lies in the validity domain of at least one law. Using transversality arguments from differential topology we prove that, for a family of $p$ laws whose singular hypersurfaces are in general position, the uncovered set is a finite union of embedded submanifolds of codimension exactly $p$. Consequently $n+1$ laws always suffice for complete coverage of an $n$-dimensional state space, two laws already reduce the uncovered set to a thin set that generic trajectories never meet, and complete coverage with fewer than $n+1$ laws is a question about specific intersections rather than about the number of components of the singularity manifold. The theory is applied exhaustively to the ball-and-beam system, whose exact linearising law is singular on the union of two transverse hypersurfaces $\{r=0\}\cup\{\dot\theta=0\}$. We construct three explicit laws, the exact law and the two canonical Hauser--Sastry--Kokotovi\'{c} approximations, compute their validity domains in closed form and prove that (i) every periodic tracking task forces the state through the singular set, so that the exact law can never be used alone; (ii) a single approximate law covers the entire physically admissible region $|\theta|<\pi/2$, and any two of the three laws cover $\R^4$ up to a codimension-two set; (iii) the three laws together cover $\R^4$ minus an explicit obstructed set on which the output is provably not linearisable by any static feedback, because the ball motion there is invariant under reversal of the input. Numerical simulations confirm each statement.
\end{abstract}

\noindent\textbf{Keywords:} approximate linearisation, ball-and-beam system, differential geometry, feedback linearisation, Lagrangian mechanics, nonlinear control, relative degree, singularities, switched control systems, transversality theory

%======================================================================
\section{Introduction}
\label{sec:introduction}
%======================================================================

Feedback linearisation stands as one of the most powerful and elegant techniques in nonlinear control theory, enabling the transformation of complex nonlinear systems into equivalent linear systems through judicious coordinate changes and state feedback~\citep{Isidori1995,Khalil2002,NijmeijervanderSchaft1990}. The central idea, developed extensively by Isidori, Krener and their collaborators during the 1980s, exploits the differential geometric structure inherent in nonlinear systems to cancel nonlinearities exactly, thereby yielding a linear input--output relationship amenable to the full arsenal of classical linear design techniques~\citep{IsidoriKrener1982,Jakubczyk1980,Hunt1983}.

However, the conditions needed for exact input--output linearisation are highly restrictive and frequently fail to hold globally across the entire state space. The seminal work of \citet{Jakubczyk1980} established necessary and sufficient conditions for exact state linearisation through the involutivity of certain distributions constructed from the system's vector fields. When these geometric conditions fail, alternative approaches become necessary. \citet{Krener1984} showed that approximate linearisation may be achieved even when exact conditions fail, while \citet{ByrnesIsidori1988} characterised the local stabilisation properties of minimum-phase nonlinear systems exhibiting such failures.

A particularly important and challenging class of problems arises when the \emph{relative degree} of a system becomes undefined on certain submanifolds of the state space, creating \emph{singularities} in the feedback linearising control law. At such singular points the control coefficient that multiplies the input vanishes, rendering the standard linearising feedback undefined. This phenomenon is not merely a mathematical curiosity but arises naturally in numerous practical systems of engineering interest.

\subsection{Literature review and motivation}

The geometric approach to nonlinear control has its intellectual roots in the pioneering work of \citet{Brockett1978}, who first considered certain forms of feedback in addition to pure state transformations. \citet{Hunt1983} provided fundamental results on global transformations of nonlinear systems, while the comprehensive theory was unified and extended by \citet{Isidori1995} in his seminal treatise, which remains the standard reference in the field. The textbooks by \citet{NijmeijervanderSchaft1990}, \citet{Khalil2002}, \citet{Sastry1999}, \citet{SlotineLi1991} and \citet{Marino1995} provide accessible introductions to these geometric methods and their applications.

The problem of systems with singularities has been addressed from various perspectives. \citet{Hauser1992} introduced approximate input--output linearisation for systems whose relative degree is not well defined: a nearby system with well-defined relative degree is constructed by neglecting terms that are small in the operating region, and the linearising law of the approximating system is applied to the true one. \citet{Tomlin1998} proposed switching between the exact linearising law away from the singularities and the approximate law of \citet{Hauser1992} close to them, and identified the behaviour of the internal (zero) dynamics at the switching surface as the decisive issue for the applicability of such a switched law. \citet{Liberzon2003} developed a comprehensive theory of switching in systems and control, providing fundamental stability criteria for switched systems based on common Lyapunov functions and dwell-time conditions. The concept of \emph{patchy vector fields} introduced by \citet{Ancona1999} guarantees the existence of finite controller families for asymptotic stabilisation of nonlinear systems, though without quantifying the number required. Sontag's influential work on input-to-state stability~\citep{Sontag1989,Sontag1995} provides additional analytical tools for assessing the robustness of such switched control architectures.

The ball-and-beam system provides an ideal canonical case study for investigating feedback linearisation singularities. This classical benchmark, found in control laboratories worldwide, exhibits all the essential features of systems with ill-defined relative degree while remaining sufficiently simple for complete analytical treatment. The foundational work of \citet{Hauser1992} demonstrated rigorously that the ball-and-beam fails the involutivity conditions for full state linearisation, yet showed that approximate input--output linearisation can nonetheless be achieved through careful modification of the system's vector fields. Their work established the theoretical foundation for the approximate linearisation techniques that we analyse here.

The mathematical foundations underlying our proofs rely on results from differential topology, particularly transversality theory as developed in the classic texts by \citet{Hirsch1976} and \citet{GuilleminPollack1974}. The Implicit Function Theorem, which plays a central role in our arguments regarding the local structure of singularity manifolds, is treated in \citet{KrantzParks2002}. We also draw upon the theory of smooth manifolds as presented in \citet{Lee2012}.

Despite this extensive body of literature, a basic question has not been answered in geometric terms: \emph{given a nonlinear system whose exact linearising law is singular on a set $\Sing$ of known geometric structure, how many distinct linearising control laws, exact or approximate, are needed so that every state of a prescribed region is covered by at least one well-defined law, and which states can never be covered?} While practical designs using several controllers have appeared~\citep{Hauser1992,Tomlin1998}, and while patchy feedback theory~\citep{Ancona1999} guarantees the existence of finite controller families, the relation between the number of laws, the dimension of the state space and the codimension of the set left uncovered has not been made explicit.

\subsection{Contributions}

The present work makes the following contributions:

\begin{enumerate}
\item \textbf{A codimension theorem for controller coverage.} For a family of $p$ linearising laws whose singular hypersurfaces are in general position, the set of states covered by none of the laws is a finite union of embedded submanifolds of codimension exactly $p$ (\Cref{thm:codim}). Hence $n+1$ laws in general position always give complete coverage of an $n$-dimensional state space, two laws already leave only a thin set that generic trajectories never meet (\Cref{lem:thin}), and complete coverage with $p\le n$ laws holds precisely when finitely many explicitly computable intersections are empty.

\item \textbf{Complete analysis of the ball-and-beam system.} We provide a self-contained Lagrangian derivation of the ball-and-beam dynamics, identify the singularity manifold $\Sing=\{r=0\}\cup\{\dot\theta=0\}$ as a regular two-component manifold in general position, construct three explicit control laws (the exact law and the two canonical approximations of \citet{Hauser1992}) and compute their validity domains in closed form (\Cref{thm:bb}).

\item \textbf{An obstruction result.} We exhibit an explicit two-dimensional set $\Obs\subset\R^4$ (beam vertical and momentarily at rest, or beam vertical with the ball at rest at the pivot) on which no static feedback can linearise the output of the ball-and-beam at any order, because the ball motion from such states is invariant under reversal of the input (\Cref{prop:obstruction,prop:symmetry}). The three laws cover exactly $\R^4\setminus\Obs$, so the family is maximal.

\item \textbf{Necessity of a second law.} We prove that every trajectory that tracks a non-constant periodic reference with bounded input and bounded beam angle crosses the singular component $\{\dot\theta=0\}$ infinitely often (\Cref{prop:crossings}); the exact law alone can therefore never execute a periodic tracking task, whereas one approximate law covers the whole admissible region $|\theta|<\pi/2$.

\item \textbf{Numerical validation.} Simulations confirm the finite-time loss of the exact law, the coverage of the admissible region by a single approximate law, the codimension of the residual sets, the scaling of the approximation error with the reference speed and the input-reversal symmetry on the obstructed set.
\end{enumerate}

\subsection{Paper organisation}

The remainder of this paper is organised as follows. \Cref{sec:preliminaries} reviews the required concepts from nonlinear control theory and differential topology. \Cref{sec:ballbeam} derives the equations of motion for the ball-and-beam system using Lagrangian mechanics. \Cref{sec:singularity} performs the input--output linearisation procedure, analyses the singularity manifold and establishes the obstruction result. \Cref{sec:controllers} develops the three control laws and computes their validity domains. \Cref{sec:theorem} presents the general coverage theory and its consequences for the ball-and-beam. \Cref{sec:architecture} describes the closed-loop architecture. \Cref{sec:simulations} provides numerical validation. \Cref{sec:discussion} discusses implications and limitations. \Cref{sec:conclusion} concludes.

%======================================================================
\section{Mathematical Preliminaries}
\label{sec:preliminaries}
%======================================================================

This section establishes the mathematical framework and notation used throughout the paper.

\subsection{Nonlinear control systems}

We consider single-input single-output (SISO) nonlinear control systems in control-affine form:
\begin{equation}
\label{eq:system}
\dot{x} = f(x) + g(x)u, \qquad y = h(x),
\end{equation}
where $x \in \mathcal{X}$ is the state, $\mathcal{X}\subseteq\R^n$ is open, $u \in \R$ is the scalar control input, $y \in \R$ is the scalar measured output and $f: \mathcal{X} \to \R^n$, $g: \mathcal{X} \to \R^n$, $h: \mathcal{X} \to \R$ are smooth (i.e.\ $C^\infty$) mappings. Throughout, the gravitational acceleration is denoted $G$, so that the letter $g$ is reserved for the input vector field.

\subsection{Lie derivatives and Lie brackets}

Differential geometry provides the natural language for analysing nonlinear control systems~\citep{Isidori1995,NijmeijervanderSchaft1990}.

\begin{definition}[Lie Derivative]
For a smooth function $\varphi: \mathcal{X} \to \R$ and a smooth vector field $v: \mathcal{X} \to \R^n$, the \emph{Lie derivative} of $\varphi$ along $v$ is
\begin{equation}
L_v\varphi(x) := \nabla\varphi(x) \cdot v(x) = \sum_{i=1}^{n} \frac{\partial \varphi}{\partial x_i}(x) \, v_i(x).
\end{equation}
\end{definition}

Higher-order Lie derivatives are defined recursively: $L_v^0\varphi := \varphi$ and $L_v^j\varphi := L_v(L_v^{j-1}\varphi)$ for $j \geq 1$.

\begin{definition}[Lie Bracket]
The \emph{Lie bracket} of two smooth vector fields $f, g: \mathcal{X} \to \R^n$ is
\begin{equation}
[f,g](x) := \frac{\partial g}{\partial x}(x) \, f(x) - \frac{\partial f}{\partial x}(x) \, g(x).
\end{equation}
\end{definition}

We write $\mathrm{ad}_f^0 g := g$ and $\mathrm{ad}_f^j g := [f, \mathrm{ad}_f^{j-1} g]$ for $j \geq 1$.

\subsection{Relative degree and exact input--output linearisation}

The concept of relative degree is fundamental to input--output linearisation~\citep{Isidori1995}.

\begin{definition}[Relative Degree]
\label{def:relative_degree}
System~\eqref{eq:system} has \emph{relative degree} $\gamma \in \N$ at $x^0 \in \mathcal{X}$ if:
\begin{enumerate}
\item[(i)] $L_gL_f^jh(x) = 0$ for all $j \in \{0, 1, \ldots, \gamma-2\}$ and all $x$ in some open neighbourhood of $x^0$;
\item[(ii)] $L_gL_f^{\gamma-1}h(x^0) \neq 0$.
\end{enumerate}
\end{definition}

When the relative degree $\gamma$ is well defined at $x^0$, repeated differentiation of $y = h(x)$ yields
\begin{align}
y &= h(x), \notag \\
\dot{y} &= L_fh(x), \notag \\
&\;\;\vdots \notag \\
y^{(\gamma-1)} &= L_f^{\gamma-1}h(x), \\
y^{(\gamma)} &= L_f^\gamma h(x) + L_gL_f^{\gamma-1}h(x) \cdot u. \notag
\end{align}
Defining
\begin{equation}
\label{eq:ab_functions}
b(x) := L_f^\gamma h(x), \qquad a(x) := L_gL_f^{\gamma-1}h(x),
\end{equation}
we obtain $y^{(\gamma)} = b(x) + a(x)\,u$. The function $a(x)$ is called the \emph{control coefficient}. When $a(x) \neq 0$, the feedback
\begin{equation}
\label{eq:linearizing_control}
u = \frac{1}{a(x)}\bigl[-b(x) + v\bigr]
\end{equation}
yields $y^{(\gamma)} = v$, where $v$ is a virtual input designed using linear techniques.

\begin{definition}[Linearising law of order $\ell$ at a point]
\label{def:law_order}
A smooth static feedback $u = \kappa(x,v)$ is a \emph{linearising law of order $\ell$ at $x^0$} for system~\eqref{eq:system} if, for every $v\in\R$, the closed-loop output satisfies $y^{(\ell)} = v$ at $x^0$. The exact law~\eqref{eq:linearizing_control} is a linearising law of order $\gamma$ at every point where $a(x)\neq 0$.
\end{definition}

\subsection{Singularities and the singularity manifold}

The linearising control~\eqref{eq:linearizing_control} is undefined when $a(x) = 0$.

\begin{definition}[Singularity at the $m$th derivative]
\label{def:m_singularity}
System~\eqref{eq:system} exhibits a \emph{singularity at the $m$th output derivative} ($m \le n$) if $L_gL_f^jh \equiv 0$ on $\mathcal{X}$ for all $j \in \{0,\ldots,m-2\}$ and the control coefficient $a := L_gL_f^{m-1}h$ vanishes on a nonempty proper closed subset
\begin{equation}
\Sing := \{x \in \mathcal{X} : a(x) = 0\},
\end{equation}
called the \emph{singularity manifold}. The \emph{exact law} $u_1(x,v) := (v - L_f^m h(x))/a(x)$ is then a linearising law of order $m$ on the \emph{regular region} $\Val_1 := \mathcal{X}\setminus\Sing$.
\end{definition}

\begin{definition}[Regular hypersurface-general position]
\label{def:general_position}
A \emph{regular hypersurface} of $\mathcal{X}$ is a set $N = \{x\in\mathcal{X} : \rho(x) = 0\}$ with $\rho$ smooth and $\dd\rho(x)\neq 0$ for all $x\in N$; by the Implicit Function Theorem $N$ is a closed embedded submanifold of codimension one. A finite family $\{N_1,\ldots,N_q\}$ of regular hypersurfaces, $N_j = \{\rho_j = 0\}$, is \emph{in general position} if, for every $x\in\bigcup_j N_j$, the covectors $\{\dd\rho_j(x) : \rho_j(x) = 0\}$ are linearly independent.
\end{definition}

\begin{definition}[Regular $k$-component singularity manifold]
\label{def:k_component}
The singularity manifold $\Sing$ is a \emph{regular $k$-component singularity manifold} if there exist smooth functions $\varphi_1,\ldots,\varphi_k : \mathcal{X}\to\R$ such that
\begin{enumerate}
\item[(i)] $a(x) = c(x)\prod_{i=1}^k\varphi_i(x)$ for some smooth $c$ with $c(x)\neq 0$ for all $x\in\mathcal{X}$;
\item[(ii)] each $\Sing_i := \{x : \varphi_i(x) = 0\}$ is a nonempty connected regular hypersurface;
\item[(iii)] the family $\{\Sing_1,\ldots,\Sing_k\}$ is in general position.
\end{enumerate}
Then $\Sing = \bigcup_{i=1}^k\Sing_i$, and the $\Sing_i$ are called the \emph{components} of $\Sing$.
\end{definition}

\begin{remark}
\label{rem:k_intrinsic}
The number $k$ is intrinsic to the set $\Sing$. Indeed, at a point where two components meet, general position implies that the two branches of $\Sing$ have distinct tangent hyperplanes, and a connected embedded hypersurface contained in $\Sing$ must continue through such a point along the branch to which it is tangent. Hence any two decompositions of $\Sing$ into connected regular hypersurfaces in general position coincide up to the order of the components. The factored form $a = c\prod\varphi_i$ reflects the structure encountered in mechanical systems, where the control coefficient is a product of kinematic terms.
\end{remark}

\subsection{Approximate input--output linearisation}

When $\Sing\neq\emptyset$, \citet{Hauser1992} propose to linearise a nearby system instead of the true one. Their construction can be phrased directly in terms of the output-derivative chain.

\begin{definition}[Approximate linearising chain of order $\ell$]
\label{def:chain}
An \emph{approximate linearising chain of order $\ell$} is a list of smooth functions $\phi_1 := h, \phi_2, \ldots, \phi_\ell$ on $\mathcal{X}$ with $\ell\le n$. Its \emph{coefficient} is $\alpha := L_g\phi_\ell$, its \emph{validity domain} is the open set
\begin{equation}
\Val := \{x\in\mathcal{X} : \alpha(x)\neq 0\},
\end{equation}
and the set $\Sing' := \{x\in\mathcal{X} : \alpha(x) = 0\}$ is its \emph{singular set}. The functions
\begin{equation}
\psi_j := L_f\phi_j - \phi_{j+1}, \qquad \chi_j := L_g\phi_j, \qquad j = 1,\ldots,\ell-1,
\end{equation}
are the \emph{drift defects} and \emph{input defects} of the chain. The associated \emph{approximate law} is
\begin{equation}
\label{eq:approx_law}
u(x,v) = \frac{1}{\alpha(x)}\bigl[-L_f\phi_\ell(x) + v\bigr], \qquad x\in\Val .
\end{equation}
The chain is \emph{consistent} on a set $Z\subseteq\mathcal{X}$ if all its defects vanish on $Z$.
\end{definition}

Along the true system~\eqref{eq:system} in closed loop with~\eqref{eq:approx_law} one has
\begin{equation}
\label{eq:chain_closed_loop}
\dot\phi_j = \phi_{j+1} + \psi_j + \chi_j u \quad (j<\ell), \qquad \dot\phi_\ell = v,
\end{equation}
i.e.\ the closed loop is an integrator chain perturbed by the defects. If all defects vanish identically, $\phi_j = L_f^{j-1}h$ and~\eqref{eq:approx_law} is the exact law of order $\ell$. The tracking-error bounds of \citet{Hauser1992} state that the error incurred by using~\eqref{eq:approx_law} on the true system is bounded in terms of the size of the defects along the desired trajectory. The approximation is therefore useful precisely where the defects are small, which in our setting means close to the set $Z$ on which the chain is consistent. The restriction $\ell\le n$ reflects the construction of \citet{Hauser1992}. The chain consists of the output derivatives of an $n$-dimensional approximating system with well-defined relative degree $\ell$, and the relative degree of an $n$-dimensional system never exceeds $n$ \citep[Lemma~4.1.1]{Isidori1995}. Every exact or approximate linearising law considered in this paper therefore has order at most $n$.

\subsection{Elements of differential topology}

Our proofs rely on the following standard facts~\citep{Hirsch1976,GuilleminPollack1974,Lee2012}.

\begin{theorem}[Regular level sets {\citep{Lee2012,KrantzParks2002}}]
\label{thm:implicit}
Let $F: \mathcal{X} \to \R^m$ be smooth with $m \leq n$. If $\dd F(x)$ has full row rank at every point of $F^{-1}(0)$, then $F^{-1}(0)$ is a closed embedded submanifold of $\mathcal{X}$ of codimension $m$ (possibly empty).
\end{theorem}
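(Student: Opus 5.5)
This is the regular value theorem, so the plan is to reduce it to the Implicit Function Theorem in local coordinates. Fix a point $x^0\in F^{-1}(0)$. Since $\dd F(x^0)$ is an $m\times n$ matrix of rank $m$, some $m$ of its columns are linearly independent. After permuting coordinates we may write $x=(z,w)$ with $z\in\R^{n-m}$ and $w\in\R^m$, where $\partial F/\partial w(x^0)$ is invertible. The Implicit Function Theorem \citep{KrantzParks2002} then provides an open neighbourhood $U=U_z\times U_w$ of $x^0$ in $\mathcal{X}$ and a smooth map $\eta:U_z\to U_w$ such that
\[
F^{-1}(0)\cap U=\{(z,\eta(z)) : z\in U_z\}.
\]

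Next I build a slice chart. Define $\Phi(z,w):=(z,\,w-\eta(z))$ on $U$. This map is a diffeomorphism onto its image, because its inverse $(z,s)\mapsto(z,s+\eta(z))$ is smooth. Under $\Phi$, the set $F^{-1}(0)\cap U$ is sent to $\Phi(U)\cap(\R^{n-m}\times\{0\})$. This is exactly the local $m$-slice condition for an embedded submanifold of dimension $n-m$, that is, of codimension $m$. As an alternative, one could take $\Phi(x)=(z,F(x))$ directly and apply the inverse function theorem, since its Jacobian is block triangular with invertible diagonal blocks. That version avoids $\eta$ altogether, and I would probably present it because it is shorter.

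Since $x^0$ was arbitrary, every point of $F^{-1}(0)$ has such a slice chart, so $F^{-1}(0)$ is an embedded submanifold of codimension $m$ \citep{Lee2012}. Closedness in $\mathcal{X}$ follows immediately: $F$ is continuous, so $F^{-1}(0)$ is the preimage of a closed set. The empty case needs nothing, since the empty set satisfies the definition vacuously.

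There is no real obstacle here; the only point needing care is bookkeeping. The rank hypothesis has to be used pointwise, and the coordinate permutation may differ from point to point, which is harmless because embeddedness is a local property. I would also note that the case $m=n$ gives a discrete closed set (codimension $n$, dimension zero), which is consistent with the later use of the result to show that $n+1$ laws in general position leave an empty uncovered set.
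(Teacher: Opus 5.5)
Your proof is correct. It is the standard implicit-function-theorem slice-chart argument, which is what the paper relies on; the paper gives no proof of its own and simply cites the result from Lee and Krantz--Parks. One cosmetic point: in Lee's convention the slice $\R^{n-m}\times\{0\}$ is called an $(n-m)$-slice rather than an $m$-slice, since the index records the dimension of the submanifold.
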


\begin{lemma}[Intersections in general position]
\label{lem:intersections}
Let $\{N_1,\ldots,N_q\}$ be regular hypersurfaces in general position, $N_j = \{\rho_j = 0\}$. For every nonempty $J\subseteq\{1,\ldots,q\}$ the set $N_J := \bigcap_{j\in J}N_j$ is either empty or a closed embedded submanifold of codimension $|J|$. If $|J| > n$ then $N_J = \emptyset$.
\end{lemma}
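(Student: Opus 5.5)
The plan is to reduce the claim to the regular level set theorem (\Cref{thm:implicit}) applied to the map $F_J := (\rho_j)_{j\in J} : \mathcal{X}\to\R^{|J|}$. By construction $N_J = F_J^{-1}(0)$, so everything hinges on verifying the rank condition at points of $N_J$.

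\textbf{Step 1: the rank condition for $|J|\le n$.} Fix $J$ with $|J|\le n$ and a point $x\in N_J$. Then $x\in\bigcup_j N_j$, and every $j\in J$ satisfies $\rho_j(x)=0$. Hence $\{\dd\rho_j(x): j\in J\}$ is a subfamily of the set $\{\dd\rho_j(x):\rho_j(x)=0\}$. That larger set is linearly independent by general position (\Cref{def:general_position}), so the subfamily is independent too. The rows of the Jacobian $\dd F_J(x)$ are exactly these covectors, so $\dd F_J(x)$ has full row rank $|J|$.

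\textbf{Step 2: conclusion for $|J|\le n$.} Since the rank condition holds at every point of $F_J^{-1}(0)$, \Cref{thm:implicit} (with $m=|J|\le n$) shows that $N_J$ is either empty or a closed embedded submanifold of codimension $|J|$.

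\textbf{Step 3: the case $|J|>n$.} Suppose some $x\in N_J$ existed. The same argument as in Step 1 would produce $|J|>n$ linearly independent covectors in the $n$-dimensional space $(\R^n)^*$, which is impossible. Hence $N_J=\emptyset$.

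The only delicate point is bookkeeping rather than mathematics. General position is stated in terms of the full set of indices vanishing at $x$, which may be strictly larger than $J$. One must therefore note explicitly that independence passes to subsets. One must also note that \Cref{thm:implicit} requires $m\le n$, which is why the case $|J|>n$ is handled separately by dimension counting.
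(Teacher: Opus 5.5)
Your proposal is correct and follows the paper's proof essentially verbatim: write $N_J = F_J^{-1}(0)$, obtain full row rank of $\dd F_J(x)$ from general position, apply \Cref{thm:implicit}, and rule out $|J|>n$ because $(\R^n)^*$ cannot contain more than $n$ independent covectors. Your explicit remarks that independence passes to subfamilies and that the case $|J|>n$ must be settled separately (since \Cref{thm:implicit} needs $m\le n$) are bookkeeping the paper leaves implicit.
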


\begin{proof}
$N_J = F_J^{-1}(0)$ with $F_J := (\rho_j)_{j\in J}$. At $x\in N_J$ all $\rho_j$, $j\in J$, vanish, so by general position the rows $\dd\rho_j(x)$, $j\in J$, of $\dd F_J(x)$ are linearly independent; \Cref{thm:implicit} applies. If $|J|>n$, more than $n$ linearly independent covectors would have to exist at a point of $N_J$, which is impossible; hence $N_J=\emptyset$.
\end{proof}

\begin{lemma}[Thin sets are not reached from generic initial states]
\label{lem:thin}
Let $F$ be a smooth vector field on an open set $\mathcal{W}\subseteq\R^n$ with flow $\Phi_t$, and let $N\subset\mathcal{W}$ be a finite union of embedded submanifolds of dimension at most $n-2$. Then the set
\begin{equation}
\mathcal{B}(N) := \{x_0\in\mathcal{W} : \Phi_t(x_0)\in N \text{ for some } t\ge 0 \text{ in the maximal interval of existence}\}
\end{equation}
has Lebesgue measure zero.
\end{lemma}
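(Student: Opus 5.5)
The plan is to write $\mathcal{B}(N)$ as the image of a smooth map from a manifold of dimension at most $n-1$, and then apply the standard fact that such an image has measure zero. A smooth map from a manifold of dimension $d<n$ into $\R^n$ is locally Lipschitz, so Sard's theorem in its easy form, or the mini-Sard lemma of \citet{GuilleminPollack1974}, gives that its image is Lebesgue-null.

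Since a finite union of null sets is null, we may assume $N$ is a single embedded submanifold of dimension $d\le n-2$. Let $\mathcal{D}\subseteq\R\times\mathcal{W}$ be the open flow domain, on which $(t,x)\mapsto\Phi_t(x)$ is smooth. Consider the set
\begin{equation}
M := \{(t,y)\in\R\times N : t\le 0,\ (t,y)\in\mathcal{D}\},
\end{equation}
which is an open subset of the half-space $(-\infty,0]\times N$. It is therefore a manifold with boundary of dimension $d+1\le n-1$. Define $\Psi(t,y):=\Phi_t(y)$ on $M$; this map is smooth because it is a restriction of the smooth flow. We then claim that $\mathcal{B}(N)=\Psi(M)$. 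Indeed, if $\Phi_t(x_0)=y\in N$ with $t\ge0$ in the maximal interval, then $(-t,y)\in\mathcal{D}$ and $x_0=\Phi_{-t}(y)$ by the group property. Conversely, if $x_0=\Phi_{s}(y)$ with $s\le0$, then $\Phi_{-s}(x_0)=y\in N$ with $-s\ge 0$.

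It remains to show that $\Psi(M)$ is null. Cover $M$ by countably many coordinate charts; this is possible because $N$, being a submanifold of $\R^n$, is second countable. In each chart, $\Psi$ becomes a smooth map from an open subset of $\R^{d+1}$ (or a half-space) into $\R^n$, with $d+1<n$. Extend it trivially to $U\times\R^{n-d-1}$ by ignoring the extra coordinates. The image of $U\times\{0\}$ is then the image of a null set under a locally Lipschitz map of $\R^n$, and is therefore null. Taking the countable union over charts completes the argument.

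The main obstacle is bookkeeping rather than depth. One must check that the flow domain restricts correctly so that $M$ is a genuine manifold and $\Psi$ is smooth, and one must keep track of the maximal-interval caveat. Both are handled by openness of $\mathcal{D}$ and smoothness of the flow. Note that only $d\le n-1$ is actually needed, and the paper's hypothesis $d\le n-2$ is stronger than this. The extra dimension contributed by time is exactly why codimension at least $2$ is the natural threshold: a hypersurface ($d=n-1$) would sweep out a set of positive measure.
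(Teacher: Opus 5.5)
Your proof is correct and is essentially the paper's argument: both realise $\mathcal{B}(N)$ through the backward flow from $N$ as the image of a smooth map on a subset of $\R\times N$, which has dimension at most $n-1$, and then use that smooth images of lower-dimensional manifolds are null. The paper differs only cosmetically: it uses all times, so the domain is open in $N\times\R$ with no boundary and only the inclusion is needed, and it cites Lee's Cor.~6.11 where you reprove that fact with the product trick.

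One slip: your closing claim that only $d\le n-1$ is needed is false. Take $F=\partial_{x_1}$ and $N=\{x_1=0\}$; then $\mathcal{B}(N)=\{x_1\le 0\}$, as your own next sentence concedes. The argument needs $d+1\le n-1$, which is exactly the hypothesis $d\le n-2$.
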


\begin{proof}
Write $N = \bigcup_i N_i$ with each $N_i$ an embedded submanifold of dimension $\le n-2$. The set $D_i := \{(z,t)\in N_i\times\R : \Phi_{-t}(z) \text{ is defined}\}$ is open in $N_i\times\R$ and $\Psi_i(z,t) := \Phi_{-t}(z)$ is smooth on it. Every $x_0\in\mathcal{B}(N)$ satisfies $x_0 = \Psi_i(z,t)$ for some $i$ and some $(z,t)\in D_i$ with $t\ge 0$, so $\mathcal{B}(N)\subseteq\bigcup_i\Psi_i(D_i)$. Each $D_i$ is a smooth manifold of dimension at most $n-1<n$, and the smooth image of a manifold of dimension less than $n$ has Lebesgue measure zero in $\R^n$ \citep[Cor.~6.11]{Lee2012}. A finite union of null sets is null.
\end{proof}

%======================================================================
\section{The Ball-and-Beam System: Lagrangian Derivation}
\label{sec:ballbeam}
%======================================================================

The ball-and-beam system serves as the primary example for illustrating and validating our theoretical results.

\subsection{Physical description and coordinate system}

Consider a rigid beam rotating freely in a vertical plane about a pivot at its centre, as depicted in \Cref{fig:ballbeam_schematic}. A torque $\tau$ applied at the pivot controls the beam angle. A solid spherical ball of mass $M$ and radius $R$ rolls without slipping along the beam surface. The system has two mechanical degrees of freedom.

\begin{figure}[htbp]
\centering
\begin{tikzpicture}[scale=1.1]
    \draw[->,thick,gray] (-0.5,0) -- (4.5,0) node[right] {$x$};
    \draw[->,thick,gray] (0,-0.5) -- (0,3) node[above] {$y$};
    \fill[black] (2,1.5) circle (3pt);
    \node[below left] at (2,1.5) {pivot};
    \draw[very thick, brown!60!black, rotate around={25:(2,1.5)}] (-0.3,1.5) -- (4.3,1.5);
    \draw[thick, brown!60!black, rotate around={25:(2,1.5)}] (-0.3,1.35) -- (4.3,1.35);
    \fill[blue!40] (3.2,2.35) circle (0.25);
    \draw[thick,blue!70!black] (3.2,2.35) circle (0.25);
    \fill[red] (3.2,2.35) circle (1.5pt);
    \draw[<->,thick,red] (2,1.65) -- (3.2,2.2) node[midway,above left] {$r$};
    \draw[thick,dashed] (2,1.5) -- (3.5,1.5);
    \draw[thick,->] (2.8,1.5) arc (0:25:0.8);
    \node at (3.1,1.65) {$\theta$};
    \draw[thick,->,green!50!black] (2,1.5) ++(0.3,0.3) arc (45:135:0.42);
    \node[green!50!black] at (1.4,2.1) {$\tau$};
    \draw[thick,->,purple] (3.2,2.35) -- (3.2,1.6) node[right] {$MG$};
    \draw[thick] (1.5,0.8) -- (2.5,0.8);
    \draw[thick] (2,0.8) -- (2,1.5);
    \fill[pattern=north east lines] (1.5,0.6) rectangle (2.5,0.8);
    \node[blue!70!black] at (3.2,2.8) {ball (mass $M$)};
    \node[brown!60!black] at (0.5,1.8) {beam};
    \draw[<->,thin] (3.2,2.35) -- (3.45,2.35) node[right,font=\small] {$R$};
\end{tikzpicture}
\caption{Schematic of the ball-and-beam system. The beam rotates about a fixed pivot with angle $\theta$ from horizontal. The ball position $r$ is the signed distance from pivot to ball centre along the beam. The torque $\tau$ is the control input.}
\label{fig:ballbeam_schematic}
\end{figure}
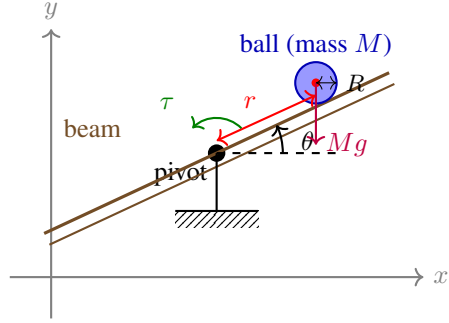

We parameterise the configuration using: $r$, the signed distance from pivot to ball centre along the beam (positive to the right), and $\theta$, the beam angle measured counterclockwise from horizontal.

The physical parameters are: $M$ (ball mass), $R$ (ball radius), $J_b$ (ball moment of inertia about its centre), $J$ (beam moment of inertia about pivot) and $G$ (gravitational acceleration). For a solid sphere, $J_b = \frac{2}{5}MR^2$.

\subsection{Kinetic and potential energy}

The ball's centre-of-mass position in the inertial frame is
$\mathbf{p} = r\cos\theta\,\hat{\mathbf{e}}_x + r\sin\theta\,\hat{\mathbf{e}}_y$,
giving velocity magnitude $|\dot{\mathbf{p}}|^2 = \dot{r}^2 + r^2\dot{\theta}^2$.

The total kinetic energy comprises translational energy of the ball, rotational energy of the ball about its own centre and rotational energy of the beam. Since the ball rolls without slipping, its angular velocity in the inertial frame is $\omega_{\text{ball}} = \dot{r}/R + \dot{\theta}$. Thus:
\begin{equation}
\label{eq:kinetic_energy}
T = \frac{1}{2}M|\dot{\mathbf{p}}|^2 + \frac{1}{2}J_b\!\left(\frac{\dot{r}}{R} + \dot{\theta}\right)^{\!2} + \frac{1}{2}J\dot{\theta}^2.
\end{equation}

Adopting the standard approximation of \citet{Hauser1992} (the cross term $J_b\dot{r}\dot{\theta}/R$ is negligible for small ball radius):
\begin{equation}
\label{eq:kinetic_approx}
T \approx \frac{1}{2}\!\left(M + \frac{J_b}{R^2}\right)\dot{r}^2 + \frac{1}{2}(Mr^2 + J + J_b)\dot{\theta}^2.
\end{equation}

The gravitational potential energy is $V = MGr\sin\theta$.

\subsection{Euler--Lagrange equations}

The Lagrangian $\mathcal{L} = T - V$ gives:
\begin{equation}
\label{eq:lagrangian}
\mathcal{L} = \frac{1}{2}\!\left(M + \frac{J_b}{R^2}\right)\dot{r}^2 + \frac{1}{2}(Mr^2 + J + J_b)\dot{\theta}^2 - MGr\sin\theta.
\end{equation}

Applying the Euler--Lagrange equations yields:
\begin{align}
\left(M + \frac{J_b}{R^2}\right)\ddot{r} - Mr\dot{\theta}^2 + MG\sin\theta &= 0, \label{eq:eom_r} \\
(Mr^2 + J + J_b)\ddot{\theta} + 2Mr\dot{r}\dot{\theta} + MGr\cos\theta &= \tau, \label{eq:eom_theta}
\end{align}
which match those of \citet{Hauser1992}.

\subsection{State-space representation}

Define $x = (x_1, x_2, x_3, x_4)^\top := (r, \dot{r}, \theta, \dot{\theta})^\top$ and
\begin{equation}
B := \frac{M}{M + J_b/R^2} = \frac{5}{7} \quad \text{(solid sphere)}.
\end{equation}

From~\eqref{eq:eom_r}: $\ddot{r} = B(x_1 x_4^2 - G\sin x_3)$.

Applying the preliminary feedback
\begin{equation}
\label{eq:preliminary_feedback}
\tau = 2Mr\dot{r}\dot{\theta} + MGr\cos\theta + (Mr^2 + J + J_b)u
\end{equation}
reduces~\eqref{eq:eom_theta} to $\ddot{\theta} = u$. The state-space form is then:
\begin{equation}
\label{eq:statespace}
\begin{aligned}
\dot{x}_1 &= x_2, \\
\dot{x}_2 &= B(x_1 x_4^2 - G\sin x_3), \\
\dot{x}_3 &= x_4, \\
\dot{x}_4 &= u,
\end{aligned}
\end{equation}
with output $y = h(x) = x_1$ (ball position) and state space $\mathcal{X} = \R^4$. In the form~\eqref{eq:system}:
\begin{equation}
\label{eq:fg_vectors}
f(x) = \begin{pmatrix} x_2 \\ B(x_1 x_4^2 - G\sin x_3) \\ x_4 \\ 0 \end{pmatrix}\!, \quad g(x) = \begin{pmatrix} 0 \\ 0 \\ 0 \\ 1 \end{pmatrix}\!.
\end{equation}
The \emph{admissible region} $\Xadm := \{x\in\R^4 : |x_3| < \pi/2\}$ is the set of states in which the beam has not passed through the vertical. 
It contains every configuration of practical interest.

%======================================================================
\section{Input--Output Linearisation and Singularity Analysis}
\label{sec:singularity}
%======================================================================

We now apply the input--output linearisation procedure to the ball-and-beam and identify the singularity structure.

\subsection{Computation of relative degree}

Starting from $y = h(x) = x_1$, we differentiate until $u$ appears.

\textbf{First derivative:}
$\dot{y} = L_fh(x) = x_2$; \quad $L_gh(x) = 0$.

\textbf{Second derivative:}
$\ddot{y} = L_f^2h(x) = Bx_1 x_4^2 - BG\sin x_3$; \quad $L_gL_fh(x) = 0$.

\textbf{Third derivative:} With $\nabla(L_f^2h) = (Bx_4^2, 0, -BG\cos x_3, 2Bx_1 x_4)$:
\begin{align}
L_f^3h(x) &= Bx_2 x_4^2 - BGx_4\cos x_3 =: b(x), \\
L_gL_f^2h(x) &= 2Bx_1 x_4 =: a(x).
\end{align}

Thus:
\begin{equation}
\label{eq:third_derivative}
y^{(3)} = \underbrace{Bx_2 x_4^2 - BGx_4\cos x_3}_{b(x)} + \underbrace{2Bx_1 x_4}_{a(x)} \cdot u.
\end{equation}
The system has relative degree $\gamma = 3$ whenever $a(x) = 2Bx_1 x_4 \neq 0$, and exhibits a singularity at the third output derivative in the sense of \Cref{def:m_singularity}.

\subsection{Structure of the singularity manifold}

The singularity manifold is
\begin{equation}
\Sing = \{x \in \R^4 : 2Bx_1 x_4 = 0\} = \underbrace{\{x : x_1 = 0\}}_{\Sing_1} \cup \underbrace{\{x : x_4 = 0\}}_{\Sing_2}.
\end{equation}

The control coefficient has the factored form
\begin{equation}
\label{eq:factored_form}
a(x) = 2B \cdot \varphi_1(x) \cdot \varphi_2(x), \quad \varphi_1(x) := x_1, \;\; \varphi_2(x) := x_4,
\end{equation}
with $c(x) = 2B = 10/7 \neq 0$.

\begin{proposition}[Ball-and-beam: regular 2-component singularity manifold]
\label{prop:two_component}
$\Sing$ is a regular $k = 2$ component singularity manifold in the sense of \Cref{def:k_component}. The two components intersect transversally along the two-dimensional plane $\Sing_1\cap\Sing_2 = \{x_1 = x_4 = 0\}$.
\end{proposition}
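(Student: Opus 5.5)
The plan is to check the three conditions of \Cref{def:k_component} directly for the factorisation \eqref{eq:factored_form}, with $\varphi_1 = x_1$, $\varphi_2 = x_4$ and $c \equiv 2B$, and then to invoke \Cref{lem:intersections} for the intersection.

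\textbf{Condition (i).} This is immediate from \eqref{eq:factored_form}. The factor $c(x) = 2B = 10/7$ is a nonzero constant, hence smooth and nowhere vanishing on $\mathcal{X} = \R^4$.

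\textbf{Condition (ii).} Each $\Sing_i$ is a coordinate hyperplane, $\{x_1 = 0\}$ or $\{x_4 = 0\}$. It is therefore nonempty (it contains the origin) and connected, being a linear subspace homeomorphic to $\R^3$. Its defining function has nowhere-vanishing differential, since $\dd\varphi_1 = \dd x_1 \neq 0$ and $\dd\varphi_2 = \dd x_4 \neq 0$ at every point. So each $\Sing_i$ is a regular hypersurface in the sense of \Cref{def:general_position}.

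\textbf{Condition (iii).} Take any $x \in \Sing_1 \cup \Sing_2$. The covectors to be checked form a subset of $\{\dd x_1, \dd x_4\}$. These are distinct standard basis covectors of $(\R^4)^*$, so every subset of them is linearly independent. This holds in particular at points of $\Sing_1 \cap \Sing_2$, where both covectors are active. Hence the family is in general position.

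\textbf{The intersection.} By \Cref{lem:intersections} with $J = \{1,2\}$, the set $\Sing_1 \cap \Sing_2 = \{x_1 = x_4 = 0\}$ is a closed embedded submanifold of codimension $2$. It is nonempty, being the plane spanned by the $x_2$- and $x_3$-axes, so it has dimension $4 - 2 = 2$. The intersection is transversal because at every common point the tangent spaces satisfy $T\Sing_1 + T\Sing_2 = \ker \dd x_1 + \ker \dd x_4 = \R^4$. This is equivalent to the linear independence of $\dd x_1$ and $\dd x_4$ already established.

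There is essentially no obstacle here. The only point needing a moment's care is connectedness in (ii), which is trivial for linear hyperplanes.
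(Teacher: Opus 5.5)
Your proposal is correct and follows the paper's proof essentially step by step: you verify (i)--(iii) of \Cref{def:k_component} with $\varphi_1 = x_1$, $\varphi_2 = x_4$, $c = 2B \neq 0$, then apply \Cref{lem:intersections} to get codimension two for $\Sing_1\cap\Sing_2$. Your additions (nonemptiness of the intersection and the tangent-space form of transversality, $\ker \dd x_1 + \ker \dd x_4 = \R^4$) are correct and only spell out what the paper leaves implicit.
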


\begin{proof}
(i) $a(x) = 2B \cdot x_1 \cdot x_4 = c\,\varphi_1\,\varphi_2$ with $c = 2B \neq 0$.
(ii) $\Sing_1$ and $\Sing_2$ are hyperplanes, hence nonempty connected regular hypersurfaces, with $\dd\varphi_1 = (1,0,0,0)$ and $\dd\varphi_2 = (0,0,0,1)$.
(iii) These covectors are linearly independent everywhere, so the family is in general position, and \Cref{lem:intersections} gives $\operatorname{codim}(\Sing_1\cap\Sing_2) = 2$.
\end{proof}

\subsection{Physical interpretation}

\textbf{$\Sing_1$: Ball at the pivot ($x_1 = 0$).} When the ball sits at the pivot, the moment arm for controlling its acceleration via beam rotation vanishes. No matter how fast the beam rotates, the centrifugal effect on the ball is zero at $r = 0$.

\textbf{$\Sing_2$: Zero beam angular velocity ($x_4 = 0$).} When the beam is momentarily stationary, there is no instantaneous coupling between beam acceleration $u = \ddot{\theta}$ and ball acceleration through the third derivative of ball position. The effect of beam acceleration appears only in higher derivatives.

\subsection{Involutivity analysis}

We verify that full state linearisation is impossible, confirming \citet{Hauser1992}. For a fourth-order system, full state linearisation requires the distribution $\Delta = \mathrm{span}\{g, \mathrm{ad}_f g, \mathrm{ad}_f^2 g\}$ to be involutive. The Jacobian of $f$ is
\begin{equation}
\frac{\partial f}{\partial x} = \begin{pmatrix} 0 & 1 & 0 & 0 \\ Bx_4^2 & 0 & -BG\cos x_3 & 2Bx_1 x_4 \\ 0 & 0 & 0 & 1 \\ 0 & 0 & 0 & 0 \end{pmatrix}\!,
\end{equation}
giving
\begin{equation}
\mathrm{ad}_f g = (0, -2Bx_1 x_4, -1, 0)^\top, \qquad \mathrm{ad}_f^2 g = (2Bx_1x_4,\, -2Bx_2x_4 - BG\cos x_3,\, 0,\, 0)^\top,
\end{equation}
and
$[g, \mathrm{ad}_f^2 g] = (2Bx_1, -2Bx_2, 0, 0)^\top$. Since
\begin{equation}
\det\bigl[\,g \;\; \mathrm{ad}_f g \;\; \mathrm{ad}_f^2 g \;\; [g,\mathrm{ad}_f^2 g]\,\bigr] = 2B^2 G\, x_1\cos x_3 ,
\end{equation}
the bracket $[g,\mathrm{ad}_f^2 g]$ does not lie in $\Delta$ wherever $x_1\cos x_3\neq 0$, confirming the failure of involutivity.

\subsection{The obstructed set}

The two singular components are not equally severe. To see this we compute the fourth output derivative of the \emph{true} system along an arbitrary $C^1$ input signal $u(t)$. Differentiating~\eqref{eq:third_derivative} along~\eqref{eq:statespace} gives
\begin{equation}
\label{eq:fourth_derivative}
y^{(4)} = \underbrace{Bx_4^2\bigl[Bx_1x_4^2 + G(1-B)\sin x_3\bigr]}_{\beta(x)} + \bigl(4Bx_2x_4 - BG\cos x_3\bigr)u + 2Bx_1u^2 + 2Bx_1x_4\,\dot u .
\end{equation}
Define the sets
\begin{align}
\Res_1 &:= \Sing_1\cap\{\cos x_3 = 0\} = \{x : x_1 = 0,\ \cos x_3 = 0\}, \\
\Res_2 &:= \Sing_2\cap\{\cos x_3 = 0\} = \{x : x_4 = 0,\ \cos x_3 = 0\}, \\
\Res  &:= \Res_1\cup\Res_2 = \Sing\cap\{\cos x_3 = 0\}, \\
\Obs  &:= \Res_2 \;\cup\; \{x : x_1 = x_2 = 0,\ \cos x_3 = 0\}. \label{eq:obstructed_set}
\end{align}
Each of $\Res_1,\Res_2$ is a countable union of two-dimensional affine planes (one for each root of $\cos x_3 = 0$), hence of codimension two in $\R^4$; $\Obs\subset\Res$ is the union of the planes $\Res_2$ and of the lines $\{x_1=x_2=0,\ x_3 = \pi/2 + j\pi\}$, $j\in\Z$. Physically, $\Res_2$ is the set of states with the beam vertical and momentarily at rest, and the second piece of $\Obs$ has the beam vertical and the ball at rest at the pivot.

\begin{proposition}[Obstruction]
\label{prop:obstruction}
At every point $x\in\Obs$ and for every $C^1$ input signal, the output derivatives of the true system satisfy
\begin{equation}
y^{(3)} = L_f^3h(x), \qquad y^{(4)} = \beta(x) + 2Bx_1u^2 ,
\end{equation}
i.e.\ $y^{(3)}$ is independent of the input and $y^{(4)}$ is an even function of the input, with image $\beta(x) + 2Bx_1[0,\infty)$ if $x_1\neq 0$ and $\{\beta(x)\}$ if $x_1 = 0$. Consequently no smooth static feedback $u=\kappa(x,v)$ is a linearising law of any order $\ell\le 4$ at $x$ (\Cref{def:law_order}). Since every exact or approximate linearising law of the fourth-order system~\eqref{eq:statespace} has order at most $4$, no such law can linearise the true output at a point of $\Obs$.
\end{proposition}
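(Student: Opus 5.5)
The plan is to substitute the defining equations of $\Obs$ into the derivative formulas~\eqref{eq:third_derivative} and~\eqref{eq:fourth_derivative}, and then argue directly from \Cref{def:law_order}.

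\textbf{Step 1: collapse of the derivative formulas on $\Obs$.} On $\Res_2$ we have $x_4=0$ and $\cos x_3=0$. Then $a(x)=2Bx_1x_4=0$, so~\eqref{eq:third_derivative} reduces to $y^{(3)}=L_f^3h(x)$. In~\eqref{eq:fourth_derivative} the coefficient $4Bx_2x_4-BG\cos x_3$ vanishes, and so does the $\dot u$ term $2Bx_1x_4\dot u$. What remains is $y^{(4)}=\beta(x)+2Bx_1u^2$.

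On the second piece, $x_1=x_2=0$ and $\cos x_3=0$. Again $a=0$, and the $\dot u$ term vanishes because $x_1=0$. The linear coefficient is now $4Bx_2x_4-BG\cos x_3=0$, since $x_2=0$. So $y^{(4)}=\beta(x)$, which agrees with the stated formula with $x_1=0$.

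The image of $u\mapsto \beta(x)+2Bx_1u^2$ is $\beta(x)+2Bx_1[0,\infty)$ when $x_1\neq0$, and the single point $\{\beta(x)\}$ otherwise. I will check that \eqref{eq:fourth_derivative} is valid pointwise for any $C^1$ input, which means the identities hold at the instant the state lies in $\Obs$. This is the only place regularity of $u$ enters.

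\textbf{Step 2: exclusion of each order $\ell\le4$.} Fix $x\in\Obs$ and a smooth $\kappa$. A linearising law of order $\ell$ must satisfy $y^{(\ell)}=v$ for every $v\in\R$.

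For $\ell=1,2$, the values $y=x_1$, $\dot y=x_2$ and $\ddot y=L_f^2h(x)$ do not depend on $u$. Hence $y^{(\ell)}$ is a single number, which cannot equal every $v$.

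For $\ell=3$, Step 1 gives $y^{(3)}=L_f^3h(x)$, which is again independent of the input. Order $\ell=0$, if it is even admitted, fails in the same way.

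For $\ell=4$, the set of values $y^{(4)}$ can take under any input is contained in a half-line or a point. That set is a proper subset of $\R$, so some $v$ (for instance $v<\beta(x)$ when $x_1>0$) is not attained.

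\textbf{Step 3: conclusion.} The final sentence of the statement follows by citing the remark after \Cref{def:chain}: every exact or approximate law has order at most $n=4$.

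\textbf{Main obstacle.} The delicate point is in $\ell=4$, where $y^{(4)}$ involves $\dot u$ and a closed-loop law $u=\kappa(x,v)$ produces a $\dot u$ that depends on $\dot x$. I handle this by noting that the coefficient of $\dot u$, namely $2Bx_1x_4$, vanishes identically at points of $\Obs$. Therefore $\dot u$ drops out regardless of how it is generated, and the range argument in Step 2 needs only the value $u=\kappa(x,v)$.
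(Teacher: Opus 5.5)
Your proposal is correct and follows the paper's proof essentially step for step. You substitute the defining equations of $\Res_2$ and of $\{x_1=x_2=0,\ \cos x_3=0\}$ into \eqref{eq:third_derivative} and \eqref{eq:fourth_derivative}, note that the coefficient of $u$ in $y^{(3)}$ and the coefficients of $u$ and $\dot u$ in $y^{(4)}$ all vanish, rule out each order $\ell\le 4$ by the same range argument (input-independence for $\ell\le 3$, a half-line or a point for $\ell=4$), and then invoke the bound $\ell\le n$ from \Cref{def:chain}, which is exactly the paper's argument.
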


\begin{proof}
On $\Obs$ one has $x_1x_4 = 0$ and $\cos x_3 = 0$, so the coefficient $a = 2Bx_1x_4$ of $u$ in~\eqref{eq:third_derivative} vanishes. The coefficient $4Bx_2x_4 - BG\cos x_3 = 4Bx_2x_4$ of $u$ in~\eqref{eq:fourth_derivative} vanishes because $x_4 = 0$ on $\Res_2$ and $x_2 = 0$ on the second piece; and the coefficient $2Bx_1x_4$ of $\dot u$ vanishes. This proves the two formulas. A linearising law of order $\ell$ requires $y^{(\ell)}$ to take every real value as $v$ ranges over $\R$; for $\ell\in\{1,2\}$ this fails because $L_gh = L_gL_fh \equiv 0$, for $\ell = 3$ because $y^{(3)}$ does not depend on $u$, and for $\ell = 4$ because the image of $u\mapsto\beta(x)+2Bx_1u^2$ is a half-line or a point. The last assertion follows from \Cref{def:chain}, according to which every exact or approximate linearising law of the fourth-order system has order $\ell\le 4$.
\end{proof}

On the larger piece $\Res_2$ of the obstructed set the situation is even more rigid. The ball cannot sense the sign of the torque at all.

\begin{proposition}[Reflection symmetry on $\Res_2$]
\label{prop:symmetry}
Let $x^\circ\in\Res_2$, i.e.\ $x_4^\circ = 0$ and $\cos x_3^\circ = 0$, and let $u(\cdot)$ be any locally integrable input. If $x(t)$ is the solution of~\eqref{eq:statespace} from $x^\circ$ under $u(\cdot)$, then
\begin{equation}
\tilde x(t) := \bigl(x_1(t),\, x_2(t),\, 2x_3^\circ - x_3(t),\, -x_4(t)\bigr)
\end{equation}
is the solution from $x^\circ$ under $-u(\cdot)$. In particular the ball motion $r(t) = x_1(t)$ from any state of $\Res_2$ is the same for the inputs $u(\cdot)$ and $-u(\cdot)$, and every output derivative at $x^\circ$ is an even function of the input.
\end{proposition}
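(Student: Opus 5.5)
The plan is a direct substitution argument combined with uniqueness of Carathéodory solutions. First I check that $\tilde x(0) = x^\circ$; this holds because $2x_3^\circ - x_3^\circ = x_3^\circ$ and $-x_4^\circ = 0 = x_4^\circ$. This is exactly where $x_4^\circ = 0$ is used. Second, I verify that $\tilde x$ satisfies~\eqref{eq:statespace} with input $-u(\cdot)$, equation by equation, almost everywhere in $t$:
\begin{itemize}
\item $\dot{\tilde x}_1 = \dot x_1 = x_2 = \tilde x_2$;
\item $\dot{\tilde x}_3 = -\dot x_3 = -x_4 = \tilde x_4$;
\item $\dot{\tilde x}_4 = -\dot x_4 = -u$.
\end{itemize}
For the remaining equation, $\dot{\tilde x}_2 = B(x_1x_4^2 - G\sin x_3)$, while the right-hand side evaluated at $\tilde x$ is $B(\tilde x_1\tilde x_4^2 - G\sin\tilde x_3) = B(x_1x_4^2 - G\sin(2x_3^\circ - x_3))$. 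These agree because $\sin(2x_3^\circ - x_3) = \sin(2x_3^\circ)\cos x_3 - \cos(2x_3^\circ)\sin x_3$. Since $\cos x_3^\circ = 0$ we have $\sin 2x_3^\circ = 2\sin x_3^\circ\cos x_3^\circ = 0$ and $\cos 2x_3^\circ = 2\cos^2x_3^\circ - 1 = -1$, so $\sin(2x_3^\circ - x_3) = \sin x_3$. This is the only place where $\cos x_3^\circ = 0$ enters. Geometrically, $x_3\mapsto 2x_3^\circ - x_3$ is a reflection about the vertical, and it leaves the gravity component $\sin x_3$ invariant.

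Third, I invoke uniqueness. The right-hand side $f(x)+g(x)u(t)$ is smooth in $x$ and locally integrable in $t$, hence locally Lipschitz in $x$ with an integrable bound on compacts. Carathéodory solutions from a given initial state are therefore unique on their maximal interval. Because $\tilde x$ is the image of $x$ under an affine bijection, it is absolutely continuous and is defined on the same interval as $x$. So $\tilde x$ is the solution from $x^\circ$ under $-u$, and in particular $\tilde x_1 = x_1$.

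Fourth, I derive the consequences. The output under $-u$ equals $x_1(t)$, which is the output under $u$, so $y(\cdot)$ is invariant under $u\mapsto -u$. For smooth inputs, every derivative $y^{(j)}(0)$ is a function of $(u(0),\dot u(0),\dots)$. Invariance of $y(\cdot)$ under the sign flip then gives that each $y^{(j)}(0)$ is invariant under replacing the whole input jet by its negative, which is the meaning of ``even function of the input'' here. This is consistent with~\eqref{eq:fourth_derivative} restricted to $\Res_2$, where $y^{(4)} = \beta + 2Bx_1u^2$.

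The main obstacle is only technical: making the uniqueness claim precise for merely locally integrable inputs and handling the maximal interval. Both follow from standard Carathéodory theory, and the algebra above is routine.
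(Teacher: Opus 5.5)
Your proposal is correct and follows the paper's proof essentially step for step. Both use the identity $\sin(2x_3^\circ - x_3) = \sin x_3$, which comes from $\cos x_3^\circ = 0$, then verify the system componentwise, check the initial condition using $x_4^\circ = 0$, and conclude by uniqueness. Your added care about Carath\'eodory uniqueness for locally integrable inputs, and your jet-level reading of ``even function of the input'', supply details that the paper leaves implicit.
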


\begin{proof}
Since $\cos x_3^\circ = 0$, we have $\sin 2x_3^\circ = 0$ and $\cos 2x_3^\circ = -1$, hence $\sin(2x_3^\circ - x_3) = \sin 2x_3^\circ\cos x_3 - \cos 2x_3^\circ\sin x_3 = \sin x_3$. Therefore $\dot{\tilde x}_1 = \tilde x_2$, $\dot{\tilde x}_2 = B(\tilde x_1\tilde x_4^2 - G\sin\tilde x_3)$, $\dot{\tilde x}_3 = -x_4 = \tilde x_4$ and $\dot{\tilde x}_4 = -u$, so $\tilde x$ solves~\eqref{eq:statespace} with input $-u$. Its initial condition is $(x_1^\circ, x_2^\circ, x_3^\circ, -x_4^\circ) = x^\circ$ because $x_4^\circ = 0$. Uniqueness of solutions completes the proof.
\end{proof}

\Cref{prop:obstruction,prop:symmetry} show that the loss of control authority on $\Obs$ is a property of the physical system and not of any particular linearisation method: whatever approximating model is chosen, its predicted first-order response to the input at a point of $\Obs$ has no counterpart in the true dynamics.

%======================================================================
\section{Construction of the Control Laws}
\label{sec:controllers}
%======================================================================

Since exact full-state linearisation is impossible, we consider the exact input--output linearising law together with the two canonical approximate laws of \citet{Hauser1992}, one obtained by neglecting a drift term and one by neglecting an input term. All three laws are written in the form $u_j(x,v) = [-b_j(x) + v]/a_j(x)$ with explicit smooth $a_j, b_j$.

\subsection{Control Law~1: exact input--output linearisation}

On $\R^4 \setminus \Sing$, where $a(x) = 2Bx_1 x_4 \neq 0$:
\begin{equation}
\label{eq:control_law_1}
u_1(x, v) = \frac{-Bx_2 x_4^2 + BGx_4\cos x_3 + v}{2Bx_1 x_4}.
\end{equation}
This achieves $y^{(3)} = v$ exactly.

\textbf{Validity domain:} $\Val_1 = \{x : x_1 \neq 0 \text{ and } x_4 \neq 0\}$.

\textbf{Singular set:} $\Sing^{(1)} = \Sing = \Sing_1 \cup \Sing_2$.

Since the relative degree is three for a fourth-order system, the exact law leaves a one-dimensional internal dynamics. Along an exactly tracked output $r(t) = r_d(t)$ the beam angle is governed by
\begin{equation}
\label{eq:internal}
\ddot r_d = B\bigl(r_d\dot\theta^2 - G\sin\theta\bigr),
\end{equation}
a first-order implicit differential equation for $\theta$. Its solutions necessarily reach $\dot\theta = 0$, i.e.\ the singular component $\Sing_2$, whenever the task is periodic (\Cref{prop:crossings}); the behaviour of this internal dynamics at the switching surface is precisely the issue studied by \citet{Tomlin1998}.

\subsection{Control Law~2: approximate linearisation by neglecting the centrifugal term ($f$-approximation)}

The centrifugal term $\psi_2(x) := Bx_1 x_4^2 = \tfrac12 x_4\, a(x)$ in $\ddot{y}$ vanishes on $\Sing$. Following \citet{Hauser1992}, neglecting it and differentiating to fourth order yields the chain
\begin{equation}
\xi_1 = x_1, \quad \xi_2 = x_2, \quad \xi_3 = -BG\sin x_3, \quad \xi_4 = -BGx_4\cos x_3,
\end{equation}
whose only nonzero defect is the drift defect $\psi_2$ (all input defects vanish, since $L_g\xi_j = 0$ for $j\le 3$). Computing $\dot{\xi}_4$ along the true system,
\begin{equation}
\dot{\xi}_4 = BGx_4^2\sin x_3 + (-BG\cos x_3) \cdot u,
\end{equation}
which gives the second control law:
\begin{equation}
\label{eq:control_law_2}
u_2(x, v) = \frac{BGx_4^2\sin x_3 - v}{BG\cos x_3}.
\end{equation}

\textbf{Validity domain:} $\Val_2 = \{x : \cos x_3 \neq 0\} \supseteq \Xadm$.

\textbf{Singular set:} $\Sing^{(2)} = \{x : \cos x_3 = 0\} = \bigcup_{j\in\Z}\{x_3 = \pi/2 + j\pi\}$, a countable union of hyperplanes. It is a regular hypersurface since $\dd(\cos x_3) = -\sin x_3\,\dd x_3 \neq 0$ on it.

\textbf{Consistency:} the chain is consistent on all of $\Sing$, because $\psi_2 = \tfrac12 x_4 a(x)$ vanishes there. The family $\{\Sing_1, \Sing_2, \Sing^{(2)}\}$ is in general position. The differentials $(1,0,0,0)$, $(0,0,0,1)$ and $(0,0,-\sin x_3,0)$ with $\sin x_3 = \pm 1$ are linearly independent everywhere on $\Sing\cup\Sing^{(2)}$.

\subsection{Control Law~3: approximate linearisation by neglecting the input term ($g$-approximation)}

Alternatively, the input term $a(x)u = 2Bx_1x_4\,u$ in~\eqref{eq:third_derivative}, which also vanishes on $\Sing$, may be neglected. The corresponding chain consists of the true output derivatives,
\begin{equation}
\zeta_1 = x_1,\quad \zeta_2 = x_2,\quad \zeta_3 = L_f^2h = B(x_1x_4^2 - G\sin x_3),\quad \zeta_4 = L_f^3h = Bx_2x_4^2 - BGx_4\cos x_3,
\end{equation}
whose only nonzero defect is the input defect $\chi_3 = L_g\zeta_3 = a(x)$. Along the true system,
\begin{equation}
\dot\zeta_4 = L_f^4h(x) + L_gL_f^3h(x)\,u = \beta(x) + B\bigl(2x_2x_4 - G\cos x_3\bigr)\,u,
\end{equation}
with $\beta(x) = Bx_4^2[Bx_1x_4^2 + G(1-B)\sin x_3]$ as in~\eqref{eq:fourth_derivative}, which gives the third control law:
\begin{equation}
\label{eq:control_law_3}
u_3(x, v) = \frac{-\beta(x) + v}{B\bigl(2x_2x_4 - G\cos x_3\bigr)} .
\end{equation}

\textbf{Validity domain:} $\Val_3 = \{x : 2x_2x_4 \neq G\cos x_3\}$. It contains every state of $\Xadm$ with $|2x_2x_4| < G\cos x_3$, i.e.\ every state of practical interest in which the product of ball speed and beam rate is moderate.

\textbf{Singular set:} $\Sing^{(3)} = \{x : \varphi_3(x) = 0\}$ with $\varphi_3 := 2x_2x_4 - G\cos x_3$. It is a regular hypersurface: $\dd\varphi_3 = (0, 2x_4, G\sin x_3, 2x_2)$ vanishes only where $x_2 = x_4 = 0$ and $\sin x_3 = 0$, and at such points $\varphi_3 = \mp G \neq 0$.

\textbf{Consistency:} the chain is consistent on all of $\Sing$, because $\chi_3 = a(x)$ vanishes there. Note that $\Sing^{(3)}\supseteq\Res_2$: on $\Res_2$ both $x_4$ and $\cos x_3$ vanish, so $\varphi_3 = 0$. This is not a deficiency of the construction but a manifestation of \Cref{prop:obstruction}.

\subsection{Summary}

\Cref{tab:control_laws} collects the three laws.

\begin{table}[htbp]
\centering
\caption{Control laws for the ball-and-beam system ($n = 4$, $k = 2$).}
\label{tab:control_laws}
\begin{tabular}{@{}cccll@{}}
\toprule
Law & Order & Coefficient $a_j(x)$ & Validity domain $\Val_j$ & Singular set \\
\midrule
1 & 3 & $2Bx_1 x_4$ & $x_1 \neq 0,\; x_4 \neq 0$ & $\Sing = \Sing_1 \cup \Sing_2$ \\
2 & 4 & $-BG\cos x_3$ & $\cos x_3 \neq 0$ & $\Sing^{(2)} = \{\cos x_3 = 0\}$ \\
3 & 4 & $B(2x_2x_4 - G\cos x_3)$ & $2x_2x_4 \neq G\cos x_3$ & $\Sing^{(3)} = \{2x_2x_4 = G\cos x_3\}$ \\
\bottomrule
\end{tabular}
\end{table}

\begin{theorem}[Coverage of the ball-and-beam]
\label{thm:bb}
Let $\Val_1, \Val_2, \Val_3$ be the validity domains in \Cref{tab:control_laws}. Then:
\begin{enumerate}
\item[(a)] $\Val_1\cup\Val_2 = \R^4\setminus\Res$, where $\Res = \Res_1\cup\Res_2$ is a countable union of two-dimensional planes. Moreover $\Val_2\supseteq\Xadm$: the single law $u_2$ covers the admissible region completely.
\item[(b)] $\Val_1\cup\Val_3 = \R^4\setminus\bigl(\Res_2\cup\{x_1 = 0,\ 2x_2x_4 = G\cos x_3\}\bigr)$ and $\Val_2\cup\Val_3 = \R^4\setminus\{\cos x_3 = 0,\ x_2x_4 = 0\}$; both uncovered sets are finite unions of embedded submanifolds of codimension two. Hence any two of the three laws cover $\R^4$ in codimension two.
\item[(c)] $\Val_1\cup\Val_2\cup\Val_3 = \R^4\setminus\Obs$. By \Cref{prop:obstruction}, $\R^4\setminus\Obs$ is the largest set on which linearising laws of order at most four can exist; the family $\{u_1,u_2,u_3\}$ is therefore maximal, and complete coverage of $\R^4$ is impossible within this class.
\item[(d)] Each single law leaves a nonempty regular hypersurface uncovered ($\Sing$, $\Sing^{(2)}$ and $\Sing^{(3)}$ respectively).
\end{enumerate}
\end{theorem}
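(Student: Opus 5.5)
The proof is pure set algebra on complements. Every uncovered set is an intersection of singular sets, $\R^4\setminus\bigcup_{j\in I}\Val_j=\bigcap_{j\in I}\Sing^{(j)}$, where $\Sing^{(1)}=\{x_1x_4=0\}$, $\Sing^{(2)}=\{\cos x_3=0\}$ and $\Sing^{(3)}=\{\varphi_3=0\}$ with $\varphi_3=2x_2x_4-G\cos x_3$. I would compute each intersection by substituting the defining equations into one another, and then check regularity with \Cref{thm:implicit} or \Cref{lem:intersections}.

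\textbf{Parts (a) and (d).} For (a), distributing gives $\Sing\cap\{\cos x_3=0\}=(\Sing_1\cap\{\cos x_3=0\})\cup(\Sing_2\cap\{\cos x_3=0\})$, which is $\Res$ by definition. The inclusion $\Xadm\subseteq\Val_2$ holds because $\cos x_3>0$ when $|x_3|<\pi/2$. General position of $\{\Sing_1,\Sing_2,\Sing^{(2)}\}$ was already noted in \Cref{sec:controllers}, so \Cref{lem:intersections} shows each plane has codimension two. For (d), the three sets are nonempty: $\Sing$ is nonempty trivially, $\Sing^{(2)}$ contains $x_3=\pi/2$, and $\Sing^{(3)}$ contains $\Res_2$. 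Their regularity is \Cref{prop:two_component} together with the regularity checks made for $\Sing^{(2)}$ and $\Sing^{(3)}$ in \Cref{sec:controllers}.

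\textbf{Part (b).} First take $\Sing^{(2)}\cap\Sing^{(3)}$. Substituting $\cos x_3=0$ into $\varphi_3=0$ leaves $x_2x_4=0$. Next take $\Sing\cap\Sing^{(3)}$ and split it as $(\Sing_1\cap\Sing^{(3)})\cup(\Sing_2\cap\Sing^{(3)})$. On $\Sing_2$ we have $x_4=0$, so $\varphi_3=-G\cos x_3$, and hence $\Sing_2\cap\Sing^{(3)}=\Res_2$. The piece $\Sing_1\cap\Sing^{(3)}$ stays in the form written in the statement.

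Codimension two needs a regularity check for each piece.
\begin{itemize}
\item $\{\cos x_3=0,\ x_2=0\}$ and $\{\cos x_3=0,\ x_4=0\}$ are planes, cut out by independent differentials $-\sin x_3\,\dd x_3$ and $\dd x_2$ (resp.\ $\dd x_4$).
\item For $\{x_1=0,\ \varphi_3=0\}$, use $F=(x_1,\varphi_3)$. Its rows are $(1,0,0,0)$ and $(0,2x_4,G\sin x_3,2x_2)$. These are dependent only where $x_2=x_4=\sin x_3=0$, and there $\varphi_3=\mp G\neq0$. So \Cref{thm:implicit} applies.
\end{itemize}
The statement says ``finite union'', but the pieces coming from $\cos x_3=0$ are countably many disjoint planes, one for each $j$. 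I would treat them as a single embedded submanifold, namely a disjoint union of closed pieces that is locally finite. Also, the two planes $\{x_2=0\}$ and $\{x_4=0\}$ meet inside $\{\cos x_3=0\}$, but each one is embedded on its own, and that suffices for a finite union.

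\textbf{Part (c).} Here I need $\Sing\cap\Sing^{(2)}\cap\Sing^{(3)}=\Res\cap\Sing^{(3)}$.
\begin{itemize}
\item On $\Res_2$ we have $\varphi_3=0$ automatically, so all of $\Res_2$ is included.
\item On $\Res_1$ we have $x_1=0$ and $\cos x_3=0$, so $\varphi_3=2x_2x_4$. This gives the set $\{x_1=0,\ \cos x_3=0,\ x_2x_4=0\}$.
\end{itemize}
The part of this last set with $x_4=0$ already lies in $\Res_2$. The part with $x_2=0$ is exactly the second piece of $\Obs$. So the uncovered set is $\Obs$. Maximality then follows directly from \Cref{prop:obstruction}: no law of order at most four is defined at any point of $\Obs$, so no family can cover more than $\R^4\setminus\Obs$.

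\textbf{Main obstacle.} The calculations are elementary, so the main risk is bookkeeping. The two places to be careful are the absorption of the $x_4=0$ branch into $\Res_2$ in (c), and the regularity of $\{x_1=0,\ \varphi_3=0\}$ in (b), which rests on the rank argument above.
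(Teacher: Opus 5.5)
Your proposal is correct and follows the paper's proof essentially step by step: complements of unions are written as intersections of singular sets, intersection is distributed over the components of $\Sing$, and $x_4=0$ or $\cos x_3=0$ is substituted into $\varphi_3$. The $x_4=0$ branch of $\Res_1\cap\Sing^{(3)}$ is absorbed into $\Res_2$, and a rank check handles $\{x_1=0,\ \varphi_3=0\}$ (you do it in $\R^4$ with $F=(x_1,\varphi_3)$ where the paper works inside $\Sing_1\cong\R^3$); your treatment of the countably many planes as single regular level sets is, if anything, more careful than the paper's.

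One wording fix in (c): \Cref{prop:obstruction} says that no law of order at most four \emph{linearises the true output} at a point of $\Obs$, not that no law is \emph{defined} there. Laws well defined on $\Obs$ do exist (see \Cref{rem:wellposed} and \Cref{thm:codim}(ii)), so maximality must be read in the linearising sense, as the paper does.
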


\begin{proof}
(a) $\R^4\setminus(\Val_1\cup\Val_2) = \Sing\cap\Sing^{(2)} = (\Sing_1\cup\Sing_2)\cap\{\cos x_3 = 0\} = \Res_1\cup\Res_2$. Each $\Res_i$ is the union over $j\in\Z$ of the planes $\{x_1 = 0, x_3 = \pi/2 + j\pi\}$, respectively $\{x_4 = 0, x_3 = \pi/2 + j\pi\}$, which have dimension two. Since $\cos x_3 > 0$ on $\Xadm$, $\Xadm\subseteq\Val_2$.

(b) $\R^4\setminus(\Val_1\cup\Val_3) = \Sing\cap\Sing^{(3)}$. On $\Sing_2$, $x_4 = 0$ forces $\varphi_3 = -G\cos x_3 = 0$, so $\Sing_2\cap\Sing^{(3)} = \Res_2$. On $\Sing_1$, $\Sing_1\cap\Sing^{(3)} = \{x_1 = 0,\ 2x_2x_4 = G\cos x_3\}$. The map $(x_2,x_3,x_4)\mapsto 2x_2x_4 - G\cos x_3$ has nonvanishing differential on its zero set (shown above), so this is a regular surface in $\Sing_1\cong\R^3$, hence an embedded submanifold of codimension two in $\R^4$. Likewise $\R^4\setminus(\Val_2\cup\Val_3) = \{\cos x_3 = 0\}\cap\{2x_2x_4 = G\cos x_3\} = \{\cos x_3 = 0,\ x_2x_4 = 0\}$, the union of the planes $\{x_2 = 0, x_3 = \pi/2 + j\pi\}$ and $\{x_4 = 0, x_3 = \pi/2 + j\pi\}$.

(c) $\R^4\setminus(\Val_1\cup\Val_2\cup\Val_3) = \Res\cap\Sing^{(3)}$. On $\Res$ we have $\cos x_3 = 0$, so $\Sing^{(3)}\cap\Res = \{x\in\Res : x_2x_4 = 0\}$. 
Every point of $\Res_2$ has $x_4 = 0$, so $\Res_2\subseteq\Sing^{(3)}$. A point of $\Res_1$ with $x_2x_4 = 0$ has either $x_4 = 0$, in which case it lies in $\Res_2$, or $x_2 = 0$, in which case it lies in $\{x_1 = x_2 = 0, \cos x_3 = 0\}$. Hence $\Res\cap\Sing^{(3)} = \Res_2\cup\{x_1 = x_2 = 0,\ \cos x_3 = 0\} = \Obs$. The remaining assertions follow from \Cref{prop:obstruction}.

(d) $\Sing$, $\Sing^{(2)}$ and $\Sing^{(3)}$ are nonempty regular hypersurfaces by \Cref{prop:two_component} and the computations above; e.g.\ $(0,0,\pi/2,1)\in\Sing^{(3)}$.
\end{proof}

%======================================================================
\section{Coverage Theory}
\label{sec:theorem}
%======================================================================

We now place the ball-and-beam results in a general framework and prove the codimension theorem that governs the number of control laws.

\subsection{Families of control laws and coverage}

\begin{definition}[Family of linearising laws; uncovered set]
\label{def:family}
A \emph{family of linearising laws} for the system~\eqref{eq:system} is a finite collection $\mathfrak{U} = \{u_1,\ldots,u_p\}$ of exact or approximate linearising laws (\Cref{def:m_singularity,def:chain}) with validity domains $\Val_1,\ldots,\Val_p$ and singular sets $\Sigma_j := \mathcal{X}\setminus\Val_j$. Its \emph{uncovered set} is
\begin{equation}
\Unc(\mathfrak{U}) := \mathcal{X}\setminus\bigcup_{j=1}^p\Val_j = \bigcap_{j=1}^p\Sigma_j .
\end{equation}
The family \emph{covers a region $\mathcal{Y}\subseteq\mathcal{X}$ completely} if $\mathcal{Y}\subseteq\bigcup_j\Val_j$, and it \emph{covers $\mathcal{X}$ in codimension $c$} if $\Unc(\mathfrak{U})$ is empty or a finite union of embedded submanifolds of codimension at least $c$.
\end{definition}

\begin{remark}
Complete coverage is the natural requirement when every state must admit a well-defined law. Coverage in codimension two is the natural requirement for trajectory tracking: by \Cref{lem:thin}, a set of codimension two is reached from a null set of initial states only, whereas a hypersurface (codimension one) is crossed by open sets of trajectories.
\end{remark}

\begin{assumption}[Regular family in general position]
\label{ass:general_position}
Each singular set is a finite union of connected regular hypersurfaces, $\Sigma_j = \bigcup_{i=1}^{k_j}\Sigma_{j,i}$, and the collection $\{\Sigma_{j,i} : 1\le j\le p,\ 1\le i\le k_j\}$ is in general position (\Cref{def:general_position}). For the exact law, $\Sigma_1 = \Sing$ and $k_1 = k$ is the number of components of the singularity manifold.
\end{assumption}

\subsection{The codimension theorem}

\begin{theorem}[Codimension of the uncovered set]
\label{thm:codim}
Let $\mathfrak{U} = \{u_1,\ldots,u_p\}$ satisfy \Cref{ass:general_position}. Then
\begin{equation}
\label{eq:uncovered_decomposition}
\Unc(\mathfrak{U}) = \bigcup_{(i_1,\ldots,i_p)}\ \bigcap_{j=1}^p\Sigma_{j,i_j},
\end{equation}
where the union runs over the $\prod_j k_j$ choices of one component from each singular set, and each set $\bigcap_j\Sigma_{j,i_j}$ is either empty or a closed embedded submanifold of $\mathcal{X}$ of codimension exactly $p$. Consequently:
\begin{enumerate}
\item[(i)] $\mathfrak{U}$ covers $\mathcal{X}$ in codimension $p$;
\item[(ii)] if $p\ge n+1$, then $\Unc(\mathfrak{U}) = \emptyset$ and $\mathfrak{U}$ covers $\mathcal{X}$ completely;
\item[(iii)] if $p\le n$, then $\mathfrak{U}$ covers $\mathcal{X}$ completely if and only if each of the finitely many intersections in~\eqref{eq:uncovered_decomposition} is empty.
\end{enumerate}
\end{theorem}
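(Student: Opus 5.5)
Because the uncovered set is an intersection of unions, the plan is to expand it by distributivity and then apply \Cref{lem:intersections} term by term. By \Cref{def:family}, $\Unc(\mathfrak{U}) = \bigcap_{j=1}^p\Sigma_j$. Under \Cref{ass:general_position} each $\Sigma_j = \bigcup_{i=1}^{k_j}\Sigma_{j,i}$, so distributing intersection over union gives
\[
\bigcap_{j=1}^p\bigcup_{i=1}^{k_j}\Sigma_{j,i} = \bigcup_{(i_1,\ldots,i_p)}\bigcap_{j=1}^p\Sigma_{j,i_j}.
\]
This is the decomposition~\eqref{eq:uncovered_decomposition}, with $\prod_j k_j$ terms. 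It is pure set theory: a point lies in the left side iff for each $j$ it lies in some component $\Sigma_{j,i_j}$, and those choices $i_j$ form a multi-index.

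The main step is to show that each term $\bigcap_j\Sigma_{j,i_j}$ is empty or a closed embedded submanifold of codimension exactly $p$. I would apply \Cref{lem:intersections} to the family $\{\Sigma_{j,i_j}\}_{j=1}^p$. This is a subfamily of the collection in \Cref{ass:general_position}, and general position passes to subfamilies: at any point, the differentials of the defining functions of the subfamily that vanish there are a subset of a linearly independent set.

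The subtle point, and the one I expect to be the real obstacle, is that these $p$ hypersurfaces must be pairwise \emph{distinct} members of the collection; otherwise $|J|<p$ and the codimension would drop. Here I would read \Cref{ass:general_position} as indexing the collection by pairs $(j,i)$. If two laws shared a component, say $\Sigma_{j,i} = \Sigma_{j',i'}$ with $j\ne j'$, the same defining covector would appear twice and linear independence would fail at every point of that nonempty hypersurface. This contradicts general position, so components belonging to different laws are distinct. (If two defining functions $\rho,\rho'$ have the same zero set, I would note that their differentials are parallel there, since both annihilate its tangent space.) Hence $J=\{(j,i_j)\}$ has $|J|=p$, and \Cref{lem:intersections} gives codimension exactly $p$, using $F_J=(\rho_{j,i_j})_j$ with full row rank $p$ and \Cref{thm:implicit}.

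The consequences then follow directly. (i) $\Unc$ is a finite union of embedded submanifolds of codimension $p$, or empty, which is coverage in codimension $p$ by \Cref{def:family}. (ii) If $p\ge n+1$, each term has $|J|>n$, so it is empty by the last clause of \Cref{lem:intersections}; hence $\Unc=\emptyset$. (iii) For $p\le n$, the decomposition shows that $\Unc=\emptyset$ iff every term of~\eqref{eq:uncovered_decomposition} is empty, and there are finitely many such terms.
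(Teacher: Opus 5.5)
Your proposal is correct and follows the paper's proof. Like the paper, you expand $\bigcap_j\Sigma_j$ by distributivity and apply the intersection lemma to each term of $p$ hypersurfaces, after which (i)--(iii) are immediate; your extra check that general position, read on the family indexed by pairs $(j,i)$, forces the $p$ chosen components to be distinct (so $|J|=p$ and the codimension is exactly $p$) makes explicit a point that the paper's phrase ``exactly $p$ members'' leaves implicit.
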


\begin{proof}
Since $\Sigma_j = \bigcup_i\Sigma_{j,i}$, distributivity of intersection over union gives~\eqref{eq:uncovered_decomposition}. Each intersection $\bigcap_j\Sigma_{j,i_j}$ involves exactly $p$ members of a family in general position, so \Cref{lem:intersections} shows that it is empty or a closed embedded submanifold of codimension $p$, and that it is empty when $p>n$. Statements (i)--(iii) follow.
\end{proof}

\begin{remark}
\label{rem:k_role}
For the most common family, the exact law together with $p-1$ approximate laws whose singular sets are single regular hypersurfaces $\Sing^{(2)},\ldots,\Sing^{(p)}$, the uncovered set has at most $k$ pieces, one inside each component of the singularity manifold:
\begin{equation}
\Unc(\mathfrak{U}) = \bigcup_{i=1}^k\Bigl(\Sing_i\cap\Sing^{(2)}\cap\cdots\cap\Sing^{(p)}\Bigr).
\end{equation}
The number $k$ of components thus determines the number of residual pieces, whereas the codimension of the residual is determined by the number $p$ of laws. In particular, one approximate law whose singular set is transverse to every component of $\Sing$ already reduces the uncovered set to codimension two, regardless of $k$; for the ball-and-beam ($k=2$) this is the content of \Cref{thm:bb}(a).
\end{remark}

\begin{remark}
\label{rem:wellposed}
Coverage in \Cref{def:family} refers to the domain on which a law is well defined. \Cref{thm:codim}(ii) therefore asserts that $n+1$ laws in general position leave no state without a well-defined law. For the ball-and-beam, \Cref{prop:obstruction} shows that no law can \emph{linearise the true output} on $\Obs$. Consequently any law that is well defined on $\Obs$ is derived from a model whose input coupling there is fictitious, and any family covering $\R^4$ completely must contain such a law. The set $\R^4\setminus\Obs$ is the largest set that can be covered by laws of order at most four, and it is covered by the three laws of \Cref{tab:control_laws}.
\end{remark}

\begin{corollary}[Two laws suffice for generic trajectories]
\label{cor:two_laws}
Let $\mathfrak{U} = \{u_1,u_2\}$ satisfy \Cref{ass:general_position}. Then $\Unc(\mathfrak{U})$ is a finite union of embedded submanifolds of codimension two, and for every smooth vector field $F$ defined on an open subset $\mathcal{W}\subseteq\mathcal{X}$ (for instance the closed loop under either law, on the interior of its validity domain), the set of initial states in $\mathcal{W}$ whose trajectory under $F$ reaches $\Unc(\mathfrak{U})$ has Lebesgue measure zero.
\end{corollary}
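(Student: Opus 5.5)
The plan is to apply \Cref{thm:codim} with $p=2$ and then \Cref{lem:thin}. By \Cref{ass:general_position}, each singular set splits as $\Sigma_1=\bigcup_{i=1}^{k_1}\Sigma_{1,i}$ and $\Sigma_2=\bigcup_{i=1}^{k_2}\Sigma_{2,i}$ into connected regular hypersurfaces, and the collection $\{\Sigma_{j,i}\}$ is in general position. Then \eqref{eq:uncovered_decomposition} gives
\begin{equation*}
\Unc(\mathfrak{U})=\bigcup_{i_1,i_2}\Sigma_{1,i_1}\cap\Sigma_{2,i_2}.
\end{equation*}
This is a union of $k_1k_2$ sets, and each is empty or a closed embedded submanifold of codimension exactly two. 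Discarding the empty pieces proves the first assertion.

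\textbf{Caveat about distinct components.} If some component occurs in both singular sets, say $\Sigma_{1,i_1}=\Sigma_{2,i_2}$, then general position as defined in \Cref{def:general_position} cannot hold, because the two covectors $\dd\rho$ would coincide and so fail to be linearly independent. Hence the pairs really are intersections of two distinct hypersurfaces, and \Cref{lem:intersections} applies with $|J|=2$. I would state this explicitly, since it is the only point where the hypotheses do real work.

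\textbf{Measure-zero statement.} Set $N:=\Unc(\mathfrak{U})\cap\mathcal{W}$. The intersection of an embedded submanifold with the open set $\mathcal{W}$ is again an embedded submanifold of the same dimension, here $n-2$. So $N$ is a finite union of embedded submanifolds of $\mathcal{W}$ of dimension at most $n-2$. A trajectory of $F$ starting in $\mathcal{W}$ stays in $\mathcal{W}$ throughout its maximal interval of existence. It therefore reaches $\Unc(\mathfrak{U})$ exactly when it reaches $N$, so the set in question is $\mathcal{B}(N)$. \Cref{lem:thin} then gives that $\mathcal{B}(N)$ has Lebesgue measure zero.

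\textbf{Closed-loop example.} For the parenthetical example, the closed loop $F(x)=f(x)+g(x)u_j(x,v(x))$ needs to be smooth. This holds on the open set $\mathcal{W}=\Val_j$ for any smooth choice of $v$, because $\alpha_j\neq0$ there. I expect no serious obstacle. The only delicate points are the bookkeeping that the relevant hypersurfaces are distinct, and checking that the restriction to $\mathcal{W}$ preserves the dimension bound required by \Cref{lem:thin}.
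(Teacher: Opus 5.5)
Your proposal is correct and takes the same route as the paper, whose proof just combines \Cref{thm:codim}(i) with $p=2$ and \Cref{lem:thin}. Your additional checks are valid details that the paper leaves implicit: general position forces the paired hypersurfaces to be distinct, and you intersect $\Unc(\mathfrak{U})$ with $\mathcal{W}$ so that \Cref{lem:thin} applies to a set contained in $\mathcal{W}$.
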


\begin{proof}
Combine \Cref{thm:codim}(i) with \Cref{lem:thin}.
\end{proof}

\subsection{Necessity of a second law for the ball-and-beam}

The exact law is undefined on $\Sing$, and its gain $1/a(x)$ is unbounded near $\Sing$. The following result shows that $\Sing_2$ cannot be avoided by any trajectory that performs a periodic task.

\begin{proposition}[Unavoidable crossings of $\Sing_2$]
\label{prop:crossings}
Let $x(t)$, $t\ge 0$, be a solution of~\eqref{eq:statespace} with a bounded measurable input $u$ and bounded beam angle $x_3$. Let $r_d$ be a non-constant periodic function of class $C^2$ and suppose that $r(t) - r_d(t)\to 0$ and $\ddot r(t) - \ddot r_d(t)\to 0$ as $t\to\infty$ (in particular, this holds if $r\equiv r_d$). Then $\dot\theta(t) = x_4(t)$ vanishes for arbitrarily large $t$: the trajectory meets $\Sing_2$ infinitely often.
\end{proposition}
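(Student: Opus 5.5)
Argue by contradiction. Suppose that for some $T\ge 0$ we have $x_4(t)\neq 0$ for all $t\ge T$. Since $x_4$ is absolutely continuous ($\dot x_4 = u$ with $u$ bounded measurable), it is continuous. By the intermediate value theorem it then has a fixed sign on $[T,\infty)$, say $x_4>0$ after possibly reflecting. Then $\theta = x_3$ is strictly monotone on $[T,\infty)$, and it is bounded by hypothesis, so $\theta(t)\to\theta_\infty$ for some finite limit. The aim is to show that this forces $\ddot r_d$ to have zero mean over a period while also being asymptotically sign-definite, and that these two facts contradict the assumption that $r_d$ is non-constant.

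The key identity comes from the second equation of~\eqref{eq:statespace}: $\ddot r = B(r x_4^2 - G\sin x_3)$. Since $\ddot r-\ddot r_d\to 0$ and $\ddot r_d$ is continuous and periodic, $\ddot r$ is bounded for large $t$, and likewise $r$ is bounded because $r-r_d\to0$. The next step is to show that $\int_t^{t+1} x_4^2\,\dd s\to 0$. This should follow from the convergence of $\theta$, since $\int_T^\infty x_4\,\dd s = \theta_\infty-\theta(T)<\infty$ with $x_4>0$, combined with the boundedness of $x_4$. To get that bound, note that $x_4$ is positive with bounded derivative $u$ and has finite integral, so $x_4(t)\to 0$: otherwise the Lipschitz bound would give bumps of fixed area infinitely often, contradicting integrability. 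Hence $x_4\to0$, and using boundedness of $r$ we get $r x_4^2\to 0$ and $\sin x_3\to\sin\theta_\infty$. Therefore $\ddot r(t)\to -BG\sin\theta_\infty =: c$, and so $\ddot r_d(t)\to c$ as $t\to\infty$.

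To finish, use periodicity. A continuous periodic function with a limit at infinity is constant, so $\ddot r_d\equiv c$. Then $\dot r_d(t)=\dot r_d(0)+ct$ is periodic only if $c=0$. With $c=0$, $r_d$ is affine and periodic, hence constant, which contradicts the hypothesis. Hence $x_4$ must vanish at arbitrarily large times. The case $x_4<0$ is identical, or follows by the symmetry $\theta\mapsto-\theta$ combined with $G\mapsto -G$; alternatively the argument above never used the sign of $x_4$ beyond monotonicity.

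The main obstacle is the step $x_4\to0$. Monotone convergence of $\theta$ only gives integrability of $x_4$, not pointwise decay, so the proof has to rely on the bounded input making $x_4$ uniformly Lipschitz. That is precisely where the hypothesis that $u$ is bounded enters, and I would write this Barbalat-type argument out carefully. A secondary point is that the conclusion "infinitely often" is obtained via zeros of a continuous function on every tail, which is immediate once the contradiction is reached.
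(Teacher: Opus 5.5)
Your proof is correct and follows the paper's argument essentially step for step: contradiction, monotone bounded $\theta$, the Barbalat-type step ($x_4$ is Lipschitz because $u$ is bounded, and $\int x_4$ converges, so $x_4\to 0$), hence $\ddot r_d(t)\to -BG\sin\theta_\infty$, and finally the fact that a continuous periodic function with a limit at infinity is constant. The only blemish is expository: the ``zero mean over a period / asymptotically sign-definite'' plan in your first paragraph is never used and can be dropped, because the limit argument already closes the proof.
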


\begin{proof}
Suppose, for contradiction, that $x_4(t)\neq 0$ for all $t\ge T$. Then $x_3$ is monotone on $[T,\infty)$; being bounded, it converges to some $\theta_\infty$. Since $\dot x_4 = u$ is bounded, $x_4$ is uniformly continuous, and Barbalat's lemma yields $x_4(t)\to 0$. As $r_d$ is bounded and $r - r_d\to 0$, $x_1$ is bounded, so $\ddot r = B(x_1x_4^2 - G\sin x_3)\to -BG\sin\theta_\infty$. Hence $\ddot r_d(t)\to -BG\sin\theta_\infty$. A continuous periodic function with a limit at infinity is constant, so $\ddot r_d\equiv c$; then $\dot r_d(t) = ct + \dot r_d(0)$ is periodic only if $c = 0$, and then $r_d$ is affine and periodic, hence constant, a contradiction.
\end{proof}

\begin{corollary}[Minimum number of laws for the ball-and-beam]
\label{cor:minimum}
For the ball-and-beam system~\eqref{eq:statespace}:
\begin{enumerate}
\item[(i)] No family consisting of the exact law alone can execute a periodic tracking task with bounded input and bounded beam angle: by \Cref{prop:crossings} the trajectory must pass through $\Sing_2$, where $u_1$ is undefined. At least one approximate law is therefore necessary.
\item[(ii)] On the admissible region $\Xadm$, one law suffices: $u_2$ covers $\Xadm$ completely (\Cref{thm:bb}(a)).
\item[(iii)] On $\R^4$, two laws are necessary and sufficient for coverage in codimension two: each single law leaves a hypersurface uncovered (\Cref{thm:bb}(d)), and any two of $u_1,u_2,u_3$ cover $\R^4$ in codimension two (\Cref{thm:bb}(a),(b)).
\item[(iv)] On $\R^4$, three laws cover the maximal coverable set $\R^4\setminus\Obs$ (\Cref{thm:bb}(c)); complete coverage of $\R^4$ by linearising laws of order at most four is impossible (\Cref{prop:obstruction}).
\end{enumerate}
\end{corollary}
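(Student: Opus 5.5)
The corollary is an assembly of results already proved, so I will take its four items in order and supply, for each, the specific earlier result together with the one extra observation it needs.

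For (i), the plan is to apply \Cref{prop:crossings} to any trajectory of~\eqref{eq:statespace} that tracks a non-constant periodic $r_d$ with bounded $u$ and bounded $x_3$, meaning $r-r_d\to0$ and $\ddot r-\ddot r_d\to 0$. The proposition then gives times $t_k\to\infty$ with $x_4(t_k)=0$, so $x(t_k)\in\Sing_2\subseteq\Sing=\mathcal{X}\setminus\Val_1$. At those times $u_1$ in~\eqref{eq:control_law_1} is undefined. Near them $|u_1|\to\infty$ unless the numerator $v+BGx_4\cos x_3-Bx_2x_4^2$ vanishes suitably, and in any case the family $\{u_1\}$ does not supply a well-defined input at $x(t_k)$. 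Hence some law whose validity domain meets $\Sing_2$ must be added, and since the exact law is unique, that law must be approximate.

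For (ii), I simply quote the last sentence of \Cref{thm:bb}(a): $\cos x_3>0$ on $\Xadm$, so $\Xadm\subseteq\Val_2$. For (iii), sufficiency is \Cref{thm:bb}(a),(b), which give each pairwise uncovered set explicitly as a finite union of embedded submanifolds of codimension two. The countable union of planes $\{x_3=\pi/2+j\pi\}$ is still a union of finitely many embedded submanifolds after grouping each family into one closed embedded submanifold with countably many components. For necessity, a single law has uncovered set $\Sigma_j$, which by \Cref{thm:bb}(d) is a nonempty regular hypersurface, hence of codimension one. So $\{u_j\}$ does not cover $\R^4$ in codimension two under \Cref{def:family}, because a nonempty codimension-one submanifold cannot be a union of submanifolds of codimension at least two. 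A dimension or measure argument on the hypersurface settles this.

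For (iv), I cite \Cref{thm:bb}(c) for $\Val_1\cup\Val_2\cup\Val_3=\R^4\setminus\Obs$. Impossibility of complete coverage follows because \Cref{prop:obstruction} forbids any linearising law of order at most four at points of $\Obs$, and $\Obs\neq\emptyset$ since, for example, $(1,0,\pi/2,0)\in\Res_2$.

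The main obstacle is the interpretive gap in (i) and (iii): "execute a tracking task" and "the family consisting of $u_1$" must be formalised as requiring a well-defined input at every state visited. I would state explicitly that a state in $\Sigma_j$ counts as uncovered. Then (i) reduces to the set-theoretic fact that $x(t_k)\in\Sigma_1$, and no analysis of the blow-up of $1/a$ is needed.
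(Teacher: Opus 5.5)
Your proposal is correct and follows the paper's own justification, which consists of exactly the citations embedded in the statement: \Cref{prop:crossings} for (i), \Cref{thm:bb}(a)--(d) for (ii)--(iii), and \Cref{thm:bb}(c) with \Cref{prop:obstruction} for (iv). Your additions make explicit three small steps the paper leaves implicit: grouping the countably many planes into single closed embedded submanifolds so that the ``finite union'' requirement of \Cref{def:family} holds literally; the measure argument showing that a nonempty hypersurface is not a finite union of codimension-two submanifolds; and the explicit point of $\Obs$ showing that $\Obs$ is nonempty.
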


%======================================================================
\section{Closed-Loop Architecture}
\label{sec:architecture}
%======================================================================

The implementation of a family of control laws requires a supervisory mechanism selecting, at every state, a law whose validity domain contains that state. For the ball-and-beam family of \Cref{tab:control_laws} a natural supervisor is
\begin{equation}
\label{eq:supervisor}
\sigma(x) = \begin{cases}
1 & \text{if } |a(x)| = |2Bx_1x_4| \ge \delta_1, \\
2 & \text{if } |a(x)| < \delta_1 \text{ and } |\cos x_3| \ge \delta_2, \\
3 & \text{if } |a(x)| < \delta_1,\ |\cos x_3| < \delta_2 \text{ and } |2x_2x_4 - G\cos x_3| \ge \delta_3,
\end{cases}
\end{equation}
with thresholds $\delta_1,\delta_2,\delta_3 > 0$. By \Cref{thm:bb}, as the thresholds tend to zero the domain of $\sigma$ exhausts $\R^4\setminus\Obs$. The states left without a law for fixed thresholds form a thin neighbourhood of $\Obs$, on which no linearising law exists in any case. On the admissible region one has $|\cos x_3| \ge \cos\theta_{\max}$ whenever $|x_3|\le\theta_{\max}<\pi/2$, so for $\delta_2 \le \cos\theta_{\max}$ only the values $\sigma\in\{1,2\}$ occur, and the choice $\delta_1 = \infty$ (i.e.\ $\sigma\equiv 2$) is admissible: the single approximate law $u_2$ is well defined along every admissible trajectory.

The inclusion of the exact law $u_1$ in the family is dictated by the coverage theory of \Cref{sec:theorem} and by its exactness away from $\Sing$. Whether switching into $u_1$ is beneficial for a given task depends on the internal dynamics~\eqref{eq:internal} at the switching surface, as analysed by \citet{Tomlin1998}. Since the exact law is used only where $|a(x)|\ge\delta_1$, its gain is bounded by $1/\delta_1$, and the switching surface $\{|a(x)| = \delta_1\}$ is a regular hypersurface transverse to $\Sing$ for every $\delta_1>0$. Any approximate law selected near $\Sing$ is consistent there (its defects vanish on $\Sing$), so the approximation error of \citet{Hauser1992} is small exactly where the approximate law is used. Stability of the switched closed loop can be assessed with the tools of \citet{Liberzon2003}. In the simulations reported below only the approximate laws are active along the tracked trajectories, so that no switching occurs.

\begin{figure}[htbp]
\centering
\begin{tikzpicture}[scale=0.85]
    % ---- left panel: (x1,x4) projection
    \begin{scope}
    \draw[->,thick] (-3.5,0) -- (3.5,0) node[right] {$x_1$};
    \draw[->,thick] (0,-3.5) -- (0,3.5) node[above] {$x_4$};
    \fill[green!10] (0.15,0.15) rectangle (3,3);
    \fill[green!10] (-3,-3) rectangle (-0.15,-0.15);
    \fill[green!10] (0.15,-3) rectangle (3,-0.15);
    \fill[green!10] (-3,0.15) rectangle (-0.15,3);
    \draw[very thick, red] (0,-3.2) -- (0,3.2);
    \node[red, fill=white] at (0.7,2.9) {$\Sing_1$};
    \draw[very thick, blue] (-3.2,0) -- (3.2,0);
    \node[blue, fill=white] at (2.9,-0.4) {$\Sing_2$};
    \node at (1.6,1.6) {\small Law~1};
    \node at (-1.6,1.6) {\small Law~1};
    \node at (1.6,-1.6) {\small Law~1};
    \node at (-1.6,-1.6) {\small Law~1};
    \fill[black] (0,0) circle (3pt);
    \node[fill=white] at (0.9,-0.75) {\small$\Sing_1 \!\cap\! \Sing_2$};
    \node[green!50!black, fill=white, align=center] at (-1.9,2.6) {\scriptsize Laws~2,3 on $\Sing$\\[-2pt]\scriptsize (where $\cos x_3\neq 0$)};
    \node at (0,-4) {\small (a) projection onto $(x_1,x_4)$};
    \end{scope}
    % ---- right panel: (x3,x4) projection
    \begin{scope}[xshift=8.5cm]
    \draw[->,thick] (-3.5,0) -- (3.5,0) node[right] {$x_3$};
    \draw[->,thick] (0,-3.5) -- (0,3.5) node[above] {$x_4$};
    \fill[yellow!15] (-2,-3) rectangle (2,3);
    \node[orange!70!black] at (0,2.7) {\small $\Xadm$: $|x_3|<\pi/2$};
    \draw[very thick, blue] (-3.2,0) -- (3.2,0);
    \node[blue, fill=white] at (2.9,-0.4) {$\Sing_2$};
    \draw[very thick, violet] (-2,-3.2) -- (-2,3.2);
    \draw[very thick, violet] (2,-3.2) -- (2,3.2);
    \node[violet, fill=white] at (2.75,2.6) {$\Sing^{(2)}$};
    \node[violet, fill=white] at (-2.75,2.6) {$\Sing^{(2)}$};
    \fill[black] (2,0) circle (3pt); \fill[black] (-2,0) circle (3pt);
    \node[fill=white] at (2.55,-0.9) {\small$\Res_2\subset\Obs$};
    \node[fill=white] at (-2.55,-0.9) {\small$\Res_2\subset\Obs$};
    \draw[thin,gray] (-2,-3.4) node[below] {\scriptsize $-\pi/2$} -- (-2,-3.2);
    \draw[thin,gray] (2,-3.4) node[below] {\scriptsize $\pi/2$} -- (2,-3.2);
    \node at (0,-4) {\small (b) projection onto $(x_3,x_4)$};
    \end{scope}
\end{tikzpicture}
\caption{Singularity structure of the ball-and-beam. (a) The components $\Sing_1 = \{x_1 = 0\}$ (red) and $\Sing_2 = \{x_4 = 0\}$ (blue) meet transversally. Law~1 is valid in the four open quadrants, Laws~2 and~3 are valid on $\Sing$ wherever $\cos x_3\neq 0$. (b) The singular set $\Sing^{(2)} = \{\cos x_3 = 0\}$ of Law~2 (violet) is transverse to $\Sing_2$. Their intersection $\Res_2$ (black dots, a two-dimensional plane in $\R^4$) belongs to the obstructed set $\Obs$, on which no linearising law exists (\Cref{prop:obstruction}). The admissible region $\Xadm$ (shaded) lies entirely inside the validity domain of Law~2.}
\label{fig:control_regions}
\end{figure}

%======================================================================
\section{Numerical Simulations}
\label{sec:simulations}
%======================================================================

Simulations use the parameters of \citet{Hauser1992}: $M = 0.05$~kg, $R = 0.01$~m, $J = 0.02$~kg$\cdot$m$^2$, $J_b = 2 \times 10^{-6}$~kg$\cdot$m$^2$, $G = 9.81$~m/s$^2$, giving $B = 5/7 \approx 0.7143$. The system~\eqref{eq:statespace} is integrated with a fixed-step fourth-order Runge--Kutta scheme (step $10^{-3}$~s) for the closed-loop runs and with an adaptive Runge--Kutta scheme (relative tolerance $10^{-10}$) for the runs that approach $\Sing$. The outer loops use pole placement: for Law~1, $v = y_d^{(3)} - \sum_{i=0}^{2} k_i\,(y^{(i)} - y_d^{(i)})$ with all poles at $s = -4$; for Laws~2 and~3, $v = y_d^{(4)} - \sum_{i=0}^{3} k_i\,(\phi_{i+1} - y_d^{(i)})$ with all poles at $s = -3$, where $\phi$ denotes the respective chain ($\xi$ or $\zeta$). No input saturation is applied.

\subsection{The exact law alone}

\Cref{fig:exact_only} illustrates \Cref{prop:crossings} and \Cref{cor:minimum}(i). The reference $y_d(t) = 0.25 + 0.15\cos(2\pi t/3)$~m keeps the ball on one side of the pivot, so that $\Sing_1$ is never approached and only the mechanism of $\Sing_2$ is at work. The initial state is taken on the reference ($r(0) = 0.4$, $\dot r(0) = 0$, $\ddot r(0) = \ddot y_d(0)$) with $\dot\theta(0) = \pm 0.3$~rad/s, so that the exact law tracks $y_d$ with zero error from the outset (the observed error is below $10^{-14}$~m). The beam angle then follows the internal dynamics~\eqref{eq:internal}, whose two branches behave very differently: for $\dot\theta(0) = -0.3$ the beam settles gently, whereas for $\dot\theta(0) = +0.3$ it swings through $185^\circ$. Both branches reach $|a(x)| < 10^{-5}$ at $t = 1.50$~s, precisely the instant at which $y_d^{(3)}$ vanishes: on $\Sing_2$ the true third derivative $y^{(3)} = L_f^3h$ is zero irrespective of the input, so exact tracking can only cross $\Sing_2$ where $y_d^{(3)} = 0$. Beyond this instant the exact law does not exist. The law is moreover extremely sensitive away from the reference: with the initial ball position displaced by $5$~mm, $|u_1|$ exceeds $10^4$~rad/s$^2$ within $18$~ms, and with $\dot\theta(0)$ displaced by $-0.05$~rad/s it exceeds $10^3$~rad/s$^2$ within $27$~ms, in both cases because the closed loop is driven into $\Sing_2$ almost immediately.

\begin{figure*}[htbp]
\centering
\includegraphics[width=\textwidth]{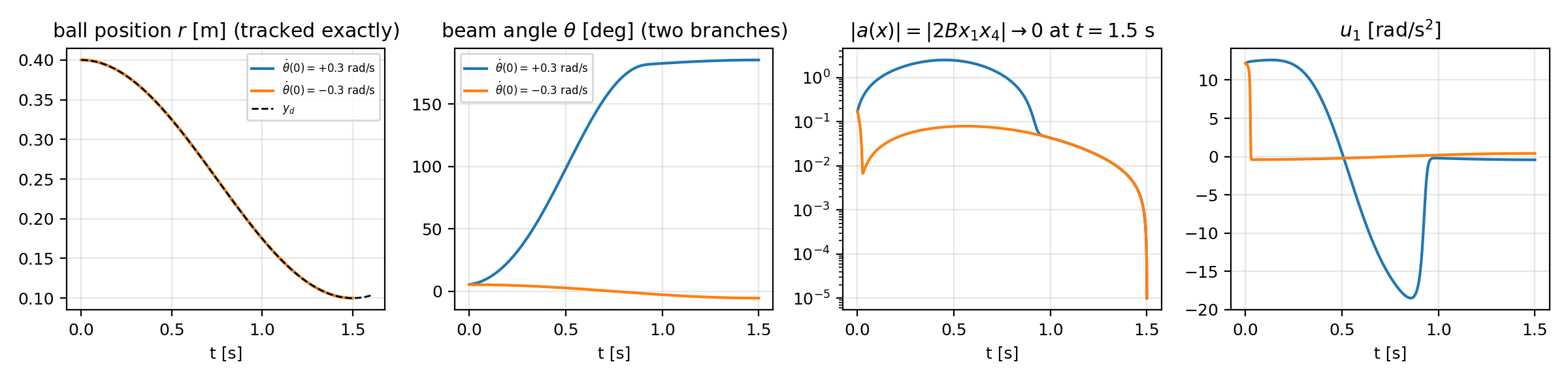}
\caption{Exact law $u_1$ alone, reference $y_d = 0.25 + 0.15\cos(2\pi t/3)$, initial state on the reference with $\dot\theta(0) = \pm0.3$~rad/s. Left to right: ball position (tracked exactly). Beam angle, showing the two branches of the internal dynamics~\eqref{eq:internal}. Control coefficient $|a(x)|$, which vanishes at $t = 1.5$~s, control input. The law ceases to exist when the trajectory reaches $\Sing_2$.}
\label{fig:exact_only}
\end{figure*}

\subsection{A single approximate law on the admissible region}

\Cref{fig:single_law} shows Law~2 alone tracking $y_d(t) = 0.4\cos(2\pi t / 3)$~m over $t \in [0, 30]$~s from the initial state $x(0) = 0$, which lies on $\Sing_1\cap\Sing_2$ (ball at rest at the pivot, beam horizontal and at rest). The reference passes through the pivot twice per period and the beam reverses its rotation twice per period, so the trajectory crosses $\Sing_1$ and $\Sing_2$ twice per period each, as the zero crossings of $a(x)$ in the figure show. Throughout the run $|\theta|\le 14.7^\circ$ and $|\cos x_3|\ge 0.967$, so the trajectory remains far from $\Sing^{(2)}$ and Law~2 is valid at every instant, in agreement with \Cref{thm:bb}(a). The steady-state tracking error (evaluated for $t\ge 10$~s) has maximum $7.1$~mm and root-mean-square value $4.6$~mm, and the input satisfies $|u|\le 7.8$~rad/s$^2$. Law~3 alone produces the same qualitative behaviour with maximum error $6.2$~mm (rms $4.0$~mm); its coefficient satisfies $|2x_2x_4|\le 1.0 \ll G\cos x_3$ along the trajectory. The residual error is the approximation error of \citet{Hauser1992}: it stems from the neglected defect ($\psi_2 = Bx_1x_4^2$ for Law~2, $\chi_3 u = 2Bx_1x_4u$ for Law~3), which is largest when the ball is far from the pivot and the beam rotates fast. \Cref{tab:period} confirms that the error decreases rapidly as the reference is slowed down, while the two singular components are still crossed twice per period.

\begin{figure*}[htbp]
\centering
\includegraphics[width=\textwidth]{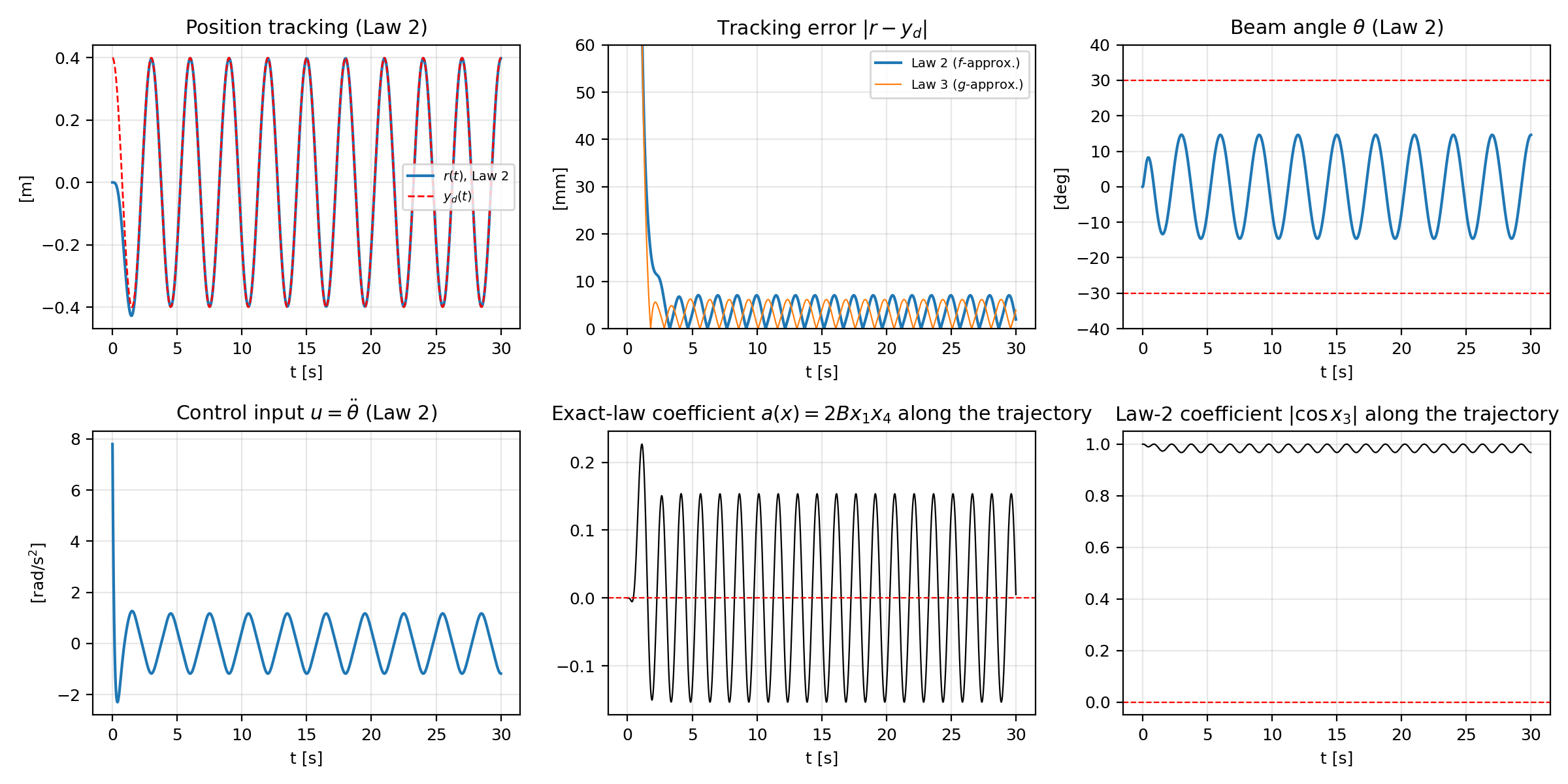}
\caption{Law~2 alone, reference $y_d = 0.4\cos(2\pi t/3)$, initial state $x(0) = 0\in\Sing_1\cap\Sing_2$. Top row: position tracking. Tracking error for Law~2 and, for comparison, for Law~3 alone. Beam angle with the $\pm30^\circ$ bounds. Bottom row: control input; exact-law coefficient $a(x) = 2Bx_1x_4$ along the trajectory, whose zero crossings mark the passages through $\Sing_1$ and $\Sing_2$. Law-2 coefficient $|\cos x_3|$, which stays above $0.96$.}
\label{fig:single_law}
\end{figure*}

\begin{table}[htbp]
\centering
\caption{Steady-state tracking error ($t\ge 10$~s) for $y_d = 0.4\cos(2\pi t/T)$ under a single approximate law, from $x(0) = 0$.}
\label{tab:period}
\small
\begin{tabular}{@{}cccccc@{}}
\toprule
$T$ [s] & Law~2 error max / rms [mm] & Law~3 error max / rms [mm] & $\max|\theta|$ [deg] & $\min|\cos x_3|$ & $\Sing_2$ crossings per period \\
\midrule
2  & 64.1 / 30.0 & 58.5 / 42.1 & 35.1 & 0.818 & 2 \\
3  & 7.05 / 4.58 & 6.16 / 4.00 & 14.7 & 0.967 & 2 \\
5  & 0.49 / 0.33 & 0.28 / 0.18 & 5.2  & 0.996 & 2 \\
10 & 0.01 / 0.01 & $<0.01$ / $<0.01$ & 2.4 & 0.999 & 2 \\
\bottomrule
\end{tabular}
\end{table}

\subsection{Singularity manifold analysis}

\Cref{fig:singularity_traces} presents the distances to the singular components along trajectories of Law~2 starting near $\Sing_1$ ($x(0) = (0, 0, 0, 0.3)$), near $\Sing_2$ ($x(0) = (0.3, 0, 0.2, 0)$) and far from $\Sing$ ($x(0) = (-0.3, 0.2, -0.3, 0.5)$). All three converge to the same periodic orbit, along which $|x_1|$ and $|x_4|$ fall to $10^{-3}$--$10^{-4}$ at each crossing and $|a(x)|$ falls below $10^{-4}$, so that the exact law would be unusable there, while $|\cos x_3|\ge 0.955$ keeps Law~2 valid. The residual set $\Res$ of \Cref{thm:bb}(a) is never approached, as the codimension count of \Cref{cor:two_laws} predicts.

\begin{figure*}[htbp]
\centering
\includegraphics[width=\textwidth]{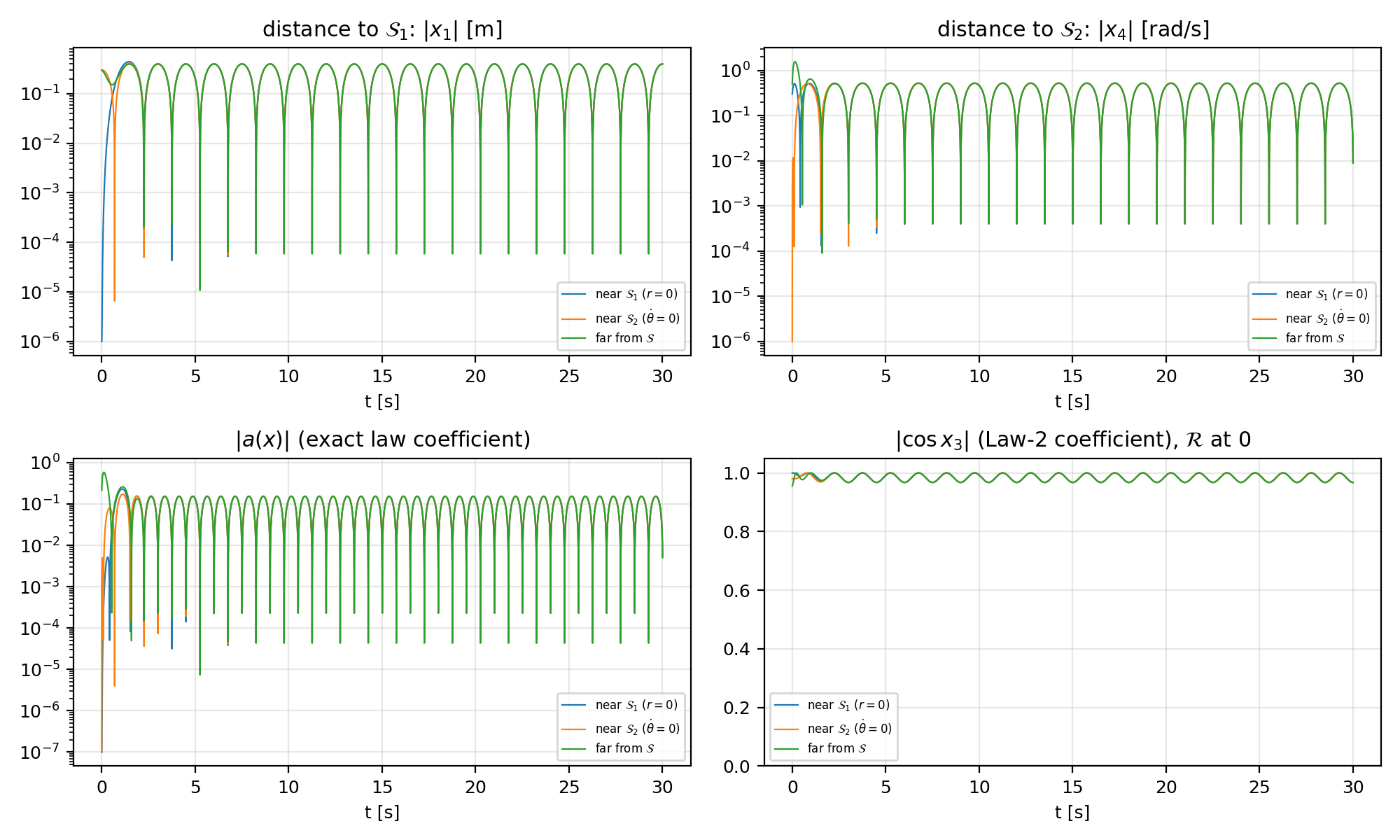}
\caption{Singularity manifold analysis under Law~2 for three initial conditions. Top row: distance to $\Sing_1$ ($|x_1|$) and to $\Sing_2$ ($|x_4|$). Bottom row: exact-law coefficient $|a(x)|$, which vanishes at every crossing of $\Sing$, and Law-2 coefficient $|\cos x_3|$, which stays bounded away from zero: the codimension-two residual $\Res = \Sing\cap\Sing^{(2)}$ is never met.}
\label{fig:singularity_traces}
\end{figure*}

\subsection{The obstructed set}

\Cref{fig:obstruction} illustrates \Cref{prop:symmetry}. From the state $x^\circ = (0.1, 0, \pi/2, 0)\in\Res_2$ (beam vertical and at rest, ball $10$~cm above the pivot) the open-loop system is driven with the constant inputs $u\equiv +10$, $u\equiv -10$ and $u\equiv 0$~rad/s$^2$. The beam angles for $u = \pm 10$ diverge in opposite directions, but the ball motions coincide to machine precision (maximum difference $2\times10^{-16}$~m over $0.6$~s): the ball cannot sense the sign of the torque. The difference between the ball motions for $u = 10$ and $u = 0$ grows as $2Bx_1u^2t^4/4!$, exactly the even term $2Bx_1u^2$ of~\eqref{eq:fourth_derivative}. No feedback law can therefore assign $y^{(4)}$ freely at $x^\circ$, in agreement with \Cref{prop:obstruction}.

\begin{figure*}[htbp]
\centering
\includegraphics[width=\textwidth]{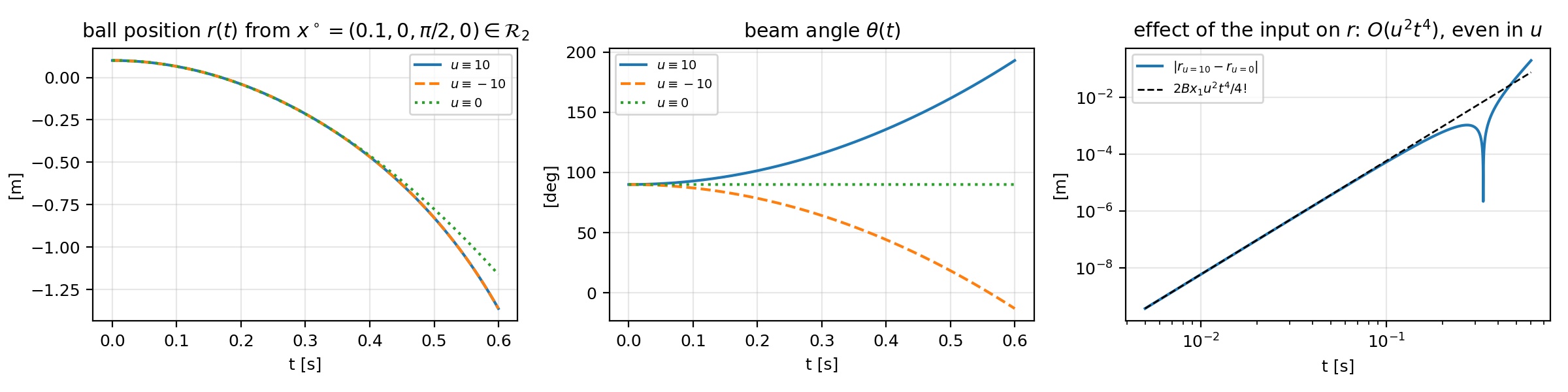}
\caption{Input-reversal symmetry on the obstructed set. Open-loop responses from $x^\circ = (0.1, 0, \pi/2, 0)\in\Res_2$ under $u\equiv 10$, $u\equiv -10$ and $u\equiv 0$: the ball motions for $\pm 10$ are identical (left), while the beam angles are mirror images (centre). The effect of the input on the ball is even in $u$ and of order $t^4$ (right).}
\label{fig:obstruction}
\end{figure*}

The simulations confirm: (1)~the exact law alone loses validity in finite time on every periodic task, as \Cref{prop:crossings} predicts. (2)~a single approximate law covers the admissible region and tracks with an error that vanishes as the reference is slowed down; (3)~the trajectories cross the codimension-one set $\Sing$ twice per period but never meet the codimension-two residual $\Res$; (4)~on the obstructed set $\Obs$ the ball motion is invariant under input reversal.

%======================================================================
\section{Discussion}
\label{sec:discussion}
%======================================================================

\subsection{Geometric interpretation}

The results have a transparent geometric reading. The factored coefficient $a(x) = c(x)\prod_{i=1}^k\varphi_i(x)$ creates $k$ hypersurface ``obstacles'' to exact linearisation. Each additional law whose singular set is transverse to the obstacles cuts the uncovered set down by one dimension: with $p$ laws in general position the residual has codimension $p$ and consists of at most $k$ pieces, one inside each obstacle (\Cref{thm:codim}, \Cref{rem:k_role}). The number of components $k$ therefore governs the \emph{shape} of the residual, not its \emph{codimension}; the latter is governed by the number of laws. Since trajectories cross hypersurfaces but avoid sets of codimension two, two laws already suffice for tracking (\Cref{cor:two_laws}), and the exact law alone never does when the task is periodic (\Cref{prop:crossings}). For the ball-and-beam the single centrifugal defect $\psi_2 = \tfrac12 x_4a(x)$ vanishes on both components at once, so one approximate law is transverse to the whole of $\Sing$, and on the admissible region it is even valid everywhere.

The obstructed set $\Obs$ shows that complete coverage of the whole state space is not merely hard but impossible for this system: on $\Obs$ the ball motion is an even function of the input (\Cref{prop:symmetry}), so no static feedback can linearise the output there at any order up to four (\Cref{prop:obstruction}). The three laws of \Cref{tab:control_laws} cover exactly the complement of $\Obs$ and are therefore a maximal family.

\subsection{Relationship to existing results}

\textbf{Approximate linearisation}~\citep{Hauser1992}: provides the two approximate laws used here and the error bounds that make their use near $\Sing$ meaningful; the present work quantifies where the laws are defined and where no law can be defined.

\textbf{Switching through singularities}~\citep{Tomlin1998}: proposes the switch between the exact law away from $\Sing$ and the approximate law near it and shows that its applicability hinges on the internal dynamics at the switching surface; \Cref{prop:crossings} and \Cref{fig:exact_only} show why the switch is unavoidable, and \Cref{thm:codim} shows what the switched family leaves uncovered.

\textbf{Patchy feedback theory}~\citep{Ancona1999}: guarantees finite controller families for stabilisation but provides no bounds tied to the singularity structure; \Cref{thm:codim} provides such a bound in terms of the codimension of the residual.

\textbf{Switched systems theory}~\citep{Liberzon2003}: provides stability tools for switched systems once the controllers and switching regions are given; it does not address which regions can be covered.

\subsection{Scope and limitations}

Coverage in the sense of \Cref{def:family} concerns the existence of a well-defined law at each state. The accuracy of an approximate law is a separate matter, governed by the size of its defects along the trajectory~\citep{Hauser1992}. Consistency of the chains on $\Sing$ guarantees that the defects are small exactly where the approximate laws are used, and \Cref{tab:period} quantifies the resulting error for the ball-and-beam.

\Cref{thm:codim} rests on \Cref{ass:general_position}, which must be verified for each family. For the ball-and-beam the family $\{u_1,u_2\}$ satisfies it everywhere, whereas $\{u_1,u_2,u_3\}$ satisfies it only away from $\Res_2$, precisely because $\Sing^{(3)}\supseteq\Res_2$. The obstruction of \Cref{prop:obstruction} is specific to the ball-and-beam. Whether analogous obstructed sets exist for other systems with singular relative degree must be established case by case.

The exact law retains a one-dimensional internal dynamics whose behaviour determines whether switching into it is advantageous~\citep{Tomlin1998}. The coverage results are independent of this question, but a complete switched design must address it together with the stability of the switched loop~\citep{Liberzon2003}. Additional limitations include the restriction to SISO systems, smooth nonlinearities, control-affine structure and static feedback. Extensions to MIMO systems with decoupling matrix singularities, to dynamic feedback and to input-constrained systems remain open.

%======================================================================
\section{Conclusions}
\label{sec:conclusion}
%======================================================================

This paper characterised the number of feedback-linearising control laws required for nonlinear systems whose exact input--output linearising law is singular on a nonempty set. 
For a family of $p$ laws whose singular hypersurfaces are in general position, the uncovered set is a finite union of embedded submanifolds of codimension exactly $p$: $n+1$ laws always cover an $n$-dimensional state space completely, two laws already leave only a codimension-two residual that generic trajectories never meet, and the number of components of the singularity manifold determines the number of residual pieces rather than their codimension.

The ball-and-beam system provided a complete case study. Its singularity manifold $\{r = 0\}\cup\{\dot\theta = 0\}$ is a regular two-component manifold in general position. Every periodic tracking task forces the state through the component $\{\dot\theta = 0\}$, so the exact law can never be used alone. The approximate law obtained by neglecting the centrifugal term covers the entire admissible region $|\theta|<\pi/2$. Any two of the exact law and the two Hauser--Sastry--Kokotovi\'{c} approximations cover $\R^4$ up to a codimension-two set, and the three laws together cover $\R^4$ minus an explicit obstructed set on which the ball motion is invariant under reversal of the input, so that no linearising law can exist there. Numerical simulations confirmed each of these statements.

Future directions include MIMO extensions, the design and stability analysis of the switching logic in the light of the internal dynamics at the switching surface, the identification of obstructed sets for other mechanical systems with singular relative degree and computational synthesis algorithms for families of approximate laws with prescribed residual codimension.

\section*{Funding}
This research received no external funding.

\section*{Data Availability}
Simulation data and code are available from the author upon reasonable request.

%======================================================================
\bibliographystyle{plainnat}
\bibliography{references}

\end{document}